\newtheorem{mydef}{Definition}
\newtheorem{theo}[mydef]{Theorem}
\newtheorem{lem}[mydef]{Lemma}
\definecolor{orcidlogocol}{HTML}{A6CE39}
\tikzset{
  orcidlogo/.pic={
    \fill[orcidlogocol] svg{M256,128c0,70.7-57.3,128-128,128C57.3,256,0,198.7,0,128C0,57.3,57.3,0,128,0C198.7,0,256,57.3,256,128z};
    \fill[white] svg{M86.3,186.2H70.9V79.1h15.4v48.4V186.2z}
                 svg{M108.9,79.1h41.6c39.6,0,57,28.3,57,53.6c0,27.5-21.5,53.6-56.8,53.6h-41.8V79.1z M124.3,172.4h24.5c34.9,0,42.9-26.5,42.9-39.7c0-21.5-13.7-39.7-43.7-39.7h-23.7V172.4z}
                 svg{M88.7,56.8c0,5.5-4.5,10.1-10.1,10.1c-5.6,0-10.1-4.6-10.1-10.1c0-5.6,4.5-10.1,10.1-10.1C84.2,46.7,88.7,51.3,88.7,56.8z};
  }
}
\newcommand\orcidicon[1]{\href{https://orcid.org/#1}{\mbox{\scalerel*{
\begin{tikzpicture}[yscale=-1,transform shape]
\pic{orcidlogo};
\end{tikzpicture}
}{|}}}}
\DeclareMathOperator{\Prob}{Prob}
\DeclareMathOperator{\expt}{E}
\DeclareMathOperator{\ch}{ch}
\begin{document}

%\preprint{APS/123-QED}

\title{Impossible measurements require impossible apparatus}

\author{Henning Bostelmann \orcidicon{0000-0002-0233-2928}}
 \email{henning.bostelmann@york.ac.uk}
 \author{Christopher J. Fewster \orcidicon{0000-0001-8915-5321}}
 \email{chris.fewster@york.ac.uk }
 \author{Maximilian H. Ruep \orcidicon{0000-0001-6866-4506}}
 \email{maximilian.ruep@york.ac.uk}
\affiliation{Department of Mathematics, University of York, Heslington, York YO10 5DD, United Kingdom
}

\date{\today}

\begin{abstract}
A well-recognised open conceptual problem in relativistic quantum field theory concerns the relation between measurement and causality. Naive generalisations of quantum measurement rules can allow for superluminal signalling (`impossible measurements'). This raises the problem of delineating physically allowed quantum measurements and  operations. We analyse this issue in a recently proposed framework in which local measurements (in possibly curved spacetime) are described physically by coupling the system to a probe. We show that the state-update rule in this setting is consistent with causality provided that the coupling between the system and probe is local. Thus, by establishing a well-defined framework for successive measurements, we also provide a class of physically allowed operations. Conversely, impossible measurements can only be performed using impossible (non-local) apparatus.
\end{abstract}

\maketitle

%\tableofcontents

\section{Introduction}
\label{sec:intro}
It is a central tenet of special and general relativity that there is a maximal speed of causal influence, the \emph{speed of light}: there can be no superluminal signalling. This should apply, in particular, to relativistic quantum field theory (QFT) and relativistic quantum information (RQI). However, as is very well known, the standard notion of \emph{measurement} challenges this tenet: it has been argued that `ideal measurements' in QFT can yield superluminal signalling~\cite{sorkin1993impossible, benincasa2014quantum} and that `nondemolition' measurements of Wilson loops in non-Abelian gauge theory can transfer charge over spacelike distances~\cite{beckman2002}. Even operations not directly associated to an ideal measurement, such as unitary transformations, can enable superluminal communication~\cite{benincasa2014quantum}. Those measurements that are in conflict with causality are called \emph{impossible measurements}~\cite{sorkin1993impossible} and their existence naturally raises the question of delineating (i) ``physically allowed quantum operations''~\cite{beckman2002}, as well as (ii) ``observables [that] can be measured consistently with causality''~\cite{sorkin1997forks}. These questions are not just of general conceptual importance~\cite{sorkin1997forks} but also directly affect applications in RQI~\cite{benincasa2014quantum, dowker2011useless} due to the lack of a clear-cut criterion for allowed operations that also allows an explicit construction of the latter. 

One way to address the difficulties just mentioned is to adopt an operational approach to measurement, in which the \emph{system} of interest is temporarily coupled to a measurement device (\emph{probe}); following a measurement of a probe observable the probe is discarded (traced out). This constitutes a \emph{measurement scheme}~\cite{busch2016quantum} for  
an induced observable of the system and, importantly, yields an associated state-update rule. Although well established in quantum mechanics, this idea was only recently adapted to QFT in possibly curved spacetimes, thus implementing the concept of a measurement scheme in a local and covariant way~\cite{fewster2018quantum} (see~\cite{fewster2019generally} for a summary). We call this the FV framework (after the authors of~\cite{fewster2018quantum}) and its elements FV measurement schemes.

In this paper we show that, due to the locality of the coupling between \emph{system} and \emph{probe}, measurements in the FV framework are \emph{not} plagued by superluminal signalling in the sense of~\cite{sorkin1993impossible}; i.e., an impossible measurement requires an impossible, non-local apparatus. As a result, FV measurement schemes are consistent with causality and the associated state-updates provide a large and explicitly calculable class of ``physically allowed quantum operations''; a significant improvement to the ``case-by-case analysis''~\cite{sorkin1997forks}, which has been the approach of all previous literature on this topic to the best of our knowledge. This reinforces the usefulness of the FV framework for treating measurements in QFT, pointing to its use as a general way to understand which operations are physically allowable, and hence forming useful underpinning for applications in RQI.

The structure of our paper is as follows: in Sec.~\ref{sec:heuristic_overview} we recall the essence of the causality problem posed by Sorkin~\cite{sorkin1993impossible} in the form of a tripartite signalling protocol applicable in flat as well as curved spacetime; this is followed by a non-technical motivation and discussion of FV measurement schemes and a statement of our main result. Section~\ref{sec:technical} comprises the precise presentation of the FV framework in the language of algebraic quantum field theory and in Sec.~\ref{sec:main_result} we show that, in the FV framework, Sorkin's protocol does not result in any acausal effects. Section~\ref{sec:causal_fac} consists of a discussion of causal factorisation and forms the basis of our discussion of multiple observers covered in Sec.~\ref{sec:multiple_obs}. In particular, we demonstrate how the state updates associated to selective and non-selective measurements (postulated in Secs.~\ref{sec:heuristic_overview} and~\ref{sec:main_result}) can be derived from the principle of causal factorisation. The consequences on causality of this analysis are displayed in Sec.~\ref{sec:Sorkin-N}, where we explicitly show that the FV framework consistently describes any finite number of causally orderable measurements without any superluminal signalling issues. As a last point we conclude and provide an outlook in Sec.~\ref{sec:conclusion}.

\section{Heuristic overview}
\label{sec:heuristic_overview}

\begin{figure}[ht]
\includegraphics[width=.4\textwidth]{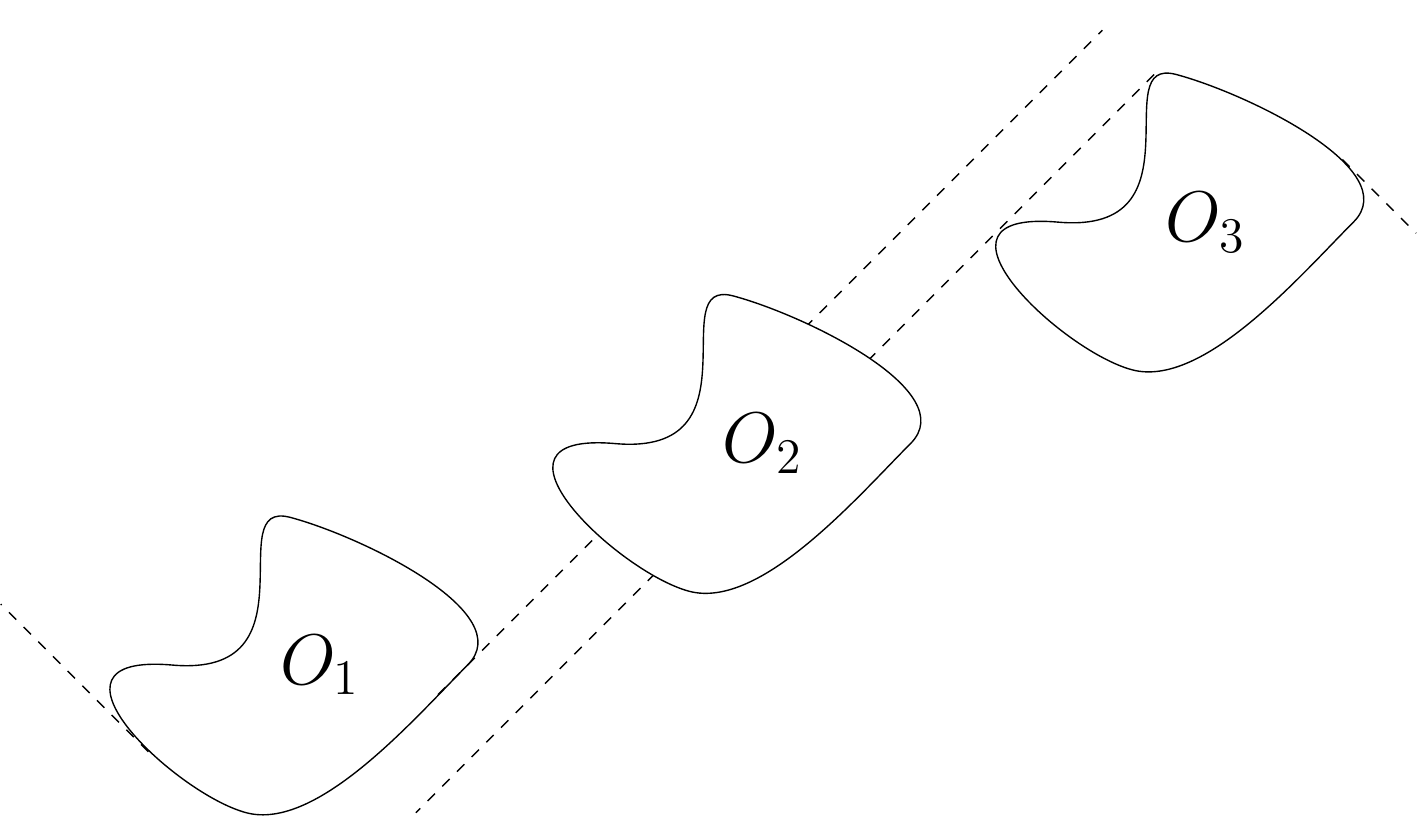}
\caption{\label{fig_1} Schematic spacetime diagram of the relative causal position of the regions $O_1$, $O_2$ and $O_3$.}
\end{figure}

\subsection{Superluminal signalling \`a la Sorkin} 

Sorkin has argued~\cite{sorkin1993impossible} that the notion of an ideal measurement conflicts with locality and causality when extended from quantum mechanics to QFT. In particular, he presented the following protocol: let Alice, Bob and Charlie be three experimenters in three laboratories performing actions in the spacetime `regions of control' $O_1,O_2,O_3$ such that parts of $O_1$ are in the past of $O_2$ and parts of $O_2$ are in the past of $O_3$, but such that $O_1$ is spacelike separated from $O_3$ as shown in Fig \ref{fig_1}. Let $A$ be a local observable of $O_1$, e.g., an algebraic combination of quantum fields smeared against test functions vanishing outside $O_1$. Define $B, C$ similarly and let $\rho$ be the initial state of the quantum field. Sorkin considers the following tripartite procedure. In step one, Alice performs a local measurement of $A$ in her laboratory. In the absence of any post-selection in the experimental data analysis, the resulting updated state is a probabilistic mixture, i.e., a convex combination of states, each selected on a different possible outcome of the measurement, weighted by the respective probabilities. This updated state is denoted $\rho_A$. In step two, Bob measures $B$, producing a further (similar) update $\rho_A \mapsto \rho_{AB}$. In step three, Charlie measures observable $C$ in state $\rho_{AB}$. Since Charlie's laboratory is spacelike separated from Alice's, $\Tr (\rho_{AB} \; C)$ should (in the absence of superluminal communication) give the same result as $\Tr (\rho_{B} \; C)$ - the situation where Alice does not measure at all. This condition, Sorkin argues, puts non-trivial constraints on feasible (ideal) measurements, to the extent that ``it becomes a priori unclear, for quantum field theory, which observables can be
measured consistently with causality and which can't. This would seem to deprive [QFT] of any definite measurement theory,
leaving the issue of what can actually be measured to (at best) a case-by-case analysis''~\cite{sorkin1997forks}. By contrast, we will show that the FV framework furnishes QFT with a definite measurement theory.

\subsection{The idea behind the FV framework} 

A measurement scheme in quantum measurement theory is the theoretical description of a measurement on a \emph{system}, prepared in state $\rho_S$, by the operational procedure of bringing it into contact with a \emph{probe}, itself to be regarded as a quantum system, and initially prepared in state $\rho_P$. The `contact' between system and probe is modelled by coupling them together via interactions. In quantum mechanics, this is achieved by an interacting unitary time-evolution which operates for a short period of time and is then removed. A subsequent measurement made on the probe is interpreted as a measurement of the system, and indeed it is possible to establish a correspondence between observables of the probe and induced observables of the system. One says that the combination of the probe, interacting dynamics and probe observable form a \emph{measurement scheme} for the induced system observable (see~\cite{busch2016quantum} for a comprehensive account). 

The FV framework translates the above idea to QFT in possibly curved spacetime; equally, it can incorporate QFT under the influence of external fields. It is phrased in terms of the algebraic approach to QFT~\cite{haag2012local} (see~\cite{fewster2019algebraic} for an introduction), but for the purposes of the following discussion we use  familiar terminology of QFT; the more formal algebraic version will be set out in Sec.~\ref{sec:technical} and used in our proof.

We consider two local relativistic QFTs, modelling the system and the probe. Taking a tensor product, they may be combined as a single theory with no coupling between them. If the two theories are obtained from Lagrangian densities $\mathcal{L}_S$ and $\mathcal{L}_P$, the uncoupled combination is defined by the sum $\mathcal{L}_S+\mathcal{L}_P$.  
The contact between system and probe is modelled by another QFT, in which the two are coupled so that the coupling is only effective within a compact set $K$ of spacetime, the \emph{coupling zone}. Crucially, it is assumed that this coupled QFT is itself a local relativistic theory. For Lagrangian theories, the coupled theory would be described by a \emph{local} coupling term such as $\mathcal{L}_I:=- \lambda \alpha(x) \phi(x) \psi(x)$,
where $\phi$ and $\psi$ are system and probe Hermitian scalar fields, respectively, and the real-valued smooth function $\alpha$, perhaps representing an external field, vanishes outside $K$. However, we emphasise that our results are not tied to this particular coupling $\mathcal{L}_I$, nor is it even required that the theories involved are described by Lagrangians. 

These assumptions allow for a direct identification between the free theory and the interacting theory \emph{before} as well as \emph{after} the coupling -- or more precisely, outside $K$'s causal future and past, respectively.
The comparison between these identifications is encoded in a unitary scattering matrix $\sf S$, which takes the place of the interacting time evolution in the quantum mechanical setting. To be specific, the adjoint action $A\mapsto {\sf S}A{\sf S}^{-1}$ of $\sf S$ is obtained by mapping from the uncoupled to coupled theory using the late-time identification, followed by mapping back to the uncoupled system using the early-time identification. This corresponds to the usual composition of M\o ller maps. (Assuming the coupled and uncoupled theories both have the \emph{time-slice property} -- see below -- any observable can be expressed in terms of either late-time or early-time observables.) In the example above the scattering map is given by ${\sf S}:=\overline{\mathcal{T}} \exp{\frac{\mathrm{i} \lambda}{\hbar} \int_K \alpha(x) \phi(x) \psi(x) \mathrm{d}x}$ to all orders in formal perturbation theory, where $\overline{\mathcal{T}} \exp$ is the \emph{anti}-time-ordered exponential and $\alpha$ functions as a smooth spacetime cutoff.

The locality of the theories under consideration is reflected in localisation properties of $\sf S$, which are discussed in more detail in Lemma \ref{lem_theta_spacelike_loc_change} below. In consequence, the idea of a measurement scheme can be implemented in QFT as a \emph{local} concept. In particular, it was shown in~\cite{fewster2018quantum,fewster2019generally} how the correspondence between probe observables and induced system observables may be made, and how rules for state update appropriate to selective and non-selective measurements may be described. 
A non-technical outline of these results now follows.

Suppose that $Z$ is a local observable of the probe theory, corresponding to a local observable $\openone\otimes Z$ of the uncoupled combination of the system and probe theories. Likewise, $\rho_S\otimes\rho_P$ is an uncorrelated state of the same theory. We consider an experiment in the coupled theory, in which an observable corresponding to $\openone\otimes Z$ at late times is measured in a state that corresponds to $\rho_S\otimes\rho_P$ at early times. The expectation value of this measurement,
$\expt_{\rho_P}(Z;\rho_S)$, is~\cite{fewster2018quantum}
\begin{equation}
    \expt_{\rho_P}(Z;\rho_S) = 
    \Tr \qty((\rho_S \otimes \rho_P) ({\sf S}(\openone \otimes Z) {\sf S^\dagger}) ).
    \label{eq_non_technical_single_exp_value}
\end{equation}
We remark that we use the term \emph{expectation value} in an operational way following the frequentist interpretation. For an analysis of the interpretation in terms of the first moment of an underlying probability measure we refer the reader to~\cite{Drago_2020}. The induced system observable $\widehat{Z}_{\rho_P}$ corresponding to probe observable $Z$ is, by definition, the observable whose expectation in state $\rho_S$ matches that of the actual experiment:
\begin{equation}
  \Tr \qty( \rho_S  \widehat{Z}_{\rho_P}) =\expt_{\rho_P}(Z;\rho_S) 
  \label{eq_non-technical_induced_observable}
\end{equation}
It turns out that the induced observable belongs to the algebra of system observables corresponding to the coupling region (more precisely, to the causal hull of any connected neighbourhood thereof).

Turning to the issue of state updates, let us, for the sake of presentation, consider the case where there is a measurement of a probe observable $Z$ and a (not further specified) measurement of a system observable $A$ in the `out' region (i.e., not to the past) of the $Z$-measurement. (A more formal discussion solely in terms of probe observables yields the same expressions for the updated states and is given in Sec.~\ref{sec:multiple_obs}.) For simplicity let us assume that $Z$ is a yes-no observable (i.e., an effect).
Analysing the $A$ and $Z$ measurement results together over an ensemble of identical runs, we may restrict to the subensemble in which the $Z$-measurement was successful (`yes'). The expectation value of $A$, conditioned on success of $Z$, is  
\begin{equation}
  \frac{\Tr \qty((\rho_S \otimes \rho_P) ({\sf S}(A \otimes Z) {\sf S^\dagger}) ) }{\Prob(Z|\rho_S)} =: \Tr\qty(\rho'_{S|Z} A),
  \label{eq_non_technical_conditional_expectation}
\end{equation}
where the system state $\rho'_{S|Z}$ defined in this way may be regarded as the updated state consequent upon successful measurement of $Z$,
which occurs in the full ensemble with probability $\Prob_{\rho_P}(Z|\rho_S)=\expt_{\rho_P}(Z;\rho_S)$. In a similar way, the updated state $\rho'_{S|\neg Z}$ conditioned on an unsuccessful (`no') $Z$-measurement 
may be obtained from the above on replacing $Z$ by $\openone-Z$. If no selection is made, then the updated state $\rho'_S$ is an appropriately weighted statistical mixture of $\rho'_{S|Z}$ and $\rho'_{S|\neg Z}$, giving
\begin{equation}
    \Tr \qty(\rho_S' C) = \Tr \qty((\rho_S \otimes \rho_P) ({\sf S}(C \otimes \openone) {\sf S^\dagger}) ).
    \label{eq_qm_single_state_update}
\end{equation}
Notice that this expression is independent of the particular observable $Z$; in tracing out the probe degrees of freedom, it is assumed that no further measurements of the probe are made.

Multiple measurements, each conducted by a different probe, may be accommodated provided that their coupling regions lie in a causal order, with each separated by a Cauchy surface from its predecessor. A crucial consistency relation established in~\cite{fewster2018quantum} implies that the rules for state updates are independent of the choice of order when more than one is possible; this was shown explicitly in~\cite{fewster2018quantum} for pairs of measurements and will be extended to the general case in Sec.~\ref{sec:multiple_obs} below. The consistency result relies on a natural assumption called \emph{causal factorisation}.

\subsection{Sorkin's protocol in the FV framework}

In the FV framework, Sorkin's protocol is modelled as follows: Alice, Bob and Charlie are each described by \emph{probes} which are coupled to the \emph{system} of interest in the compact coupling zones $K_1, K_2, K_3$ each contained in the connected \emph{regions} $O_1, O_2, O_3$ respectively, in which the experimenters perform actions. This guarantees that there is a natural causal order of $K_1, K_2, K_3$, i.e., the one inherited from $O_1, O_2, O_3$, see Fig \ref{fig_1}. In particular, there are Cauchy surfaces having $K_1$ to their past and $K_2$ to their future;  
$K_2$ and $K_3$ are also separated by Cauchy surfaces in the same way.

The measurements of Alice and Bob in step one and two of the protocol produce an update of the system state $\rho_S \mapsto \rho_{AB}$ according to
\begin{equation}
    \begin{aligned}
    &\Tr \qty(\rho_{AB} C)\\
    &= \Tr \qty((\rho_S \otimes \rho_{P_1} \otimes \rho_{P_2}) ( { \sf S_1  S_2}(C \otimes \openone \otimes \openone) { \sf S_2^\dagger S_1^\dagger}) ),
    \label{eq_qm_AB_state_update}
    \end{aligned}
\end{equation}
which is a straightforward generalisation of Eq.~\eqref{eq_qm_single_state_update} using the natural causal order of the three experimenters. In fact there is no ambiguity if $K_1$, $K_2$ and $K_3$ admit other causal orders (which can happen if $K_2$ is spacelike from $K_1$ or $K_3$) -- see Sec.~\ref{sec:causal_fac}.
As argued above, the expectation value of Charlie's measurement in step three is given by Eq.~\eqref{eq_qm_AB_state_update} for a probe-induced system observable $C$, which is determined by the interaction between Charlie's probe and the system in coupling zone $K_3$ and may be localised in $O_3$. The superluminal signalling between the spacelike separated experimenters Alice and Charlie in Sorkin's protocol arises if $\Tr \qty(\rho_{AB} C)$ differs from $\Tr \qty(\rho_{B} C)$, where $\rho_B$ is the updated state in a situation where Alice does not perform an experiment; i.e., where there is \emph{no} coupling between her probe and the system and hence no measurement is made on the system. This corresponds to Eq.~\eqref{eq_qm_AB_state_update} in the case where ${\sf S_1} = \openone$. Hence, there is \emph{no} superluminal signalling if 
\begin{equation}
    \begin{aligned}
    &\Tr \qty((\rho_S \otimes \rho_{P_1} \otimes \rho_{P_2}) ({\sf S_1 S_2}(C \otimes \openone \otimes \openone) { \sf S_2^\dagger S_1^\dagger}))\\
    &=\Tr \qty((\rho_S\otimes \rho_{P_2}) ({\sf S_2}(C \otimes \openone) {\sf S_2^\dagger}))
    \end{aligned}
    \label{eq_no_signalling}
\end{equation}
for system observables $C$ induced by Charlie's probe. 
The main result of this paper is that~\eqref{eq_no_signalling} holds under very mild technical assumptions. This result is stated and proved as Theorem \ref{theo_no_signalling} in Sec.~\ref{sec:main_result} and makes essential use of the localisation properties of the scattering map.

In fact, the statement we prove is actually more general as it establishes the desired equality for \emph{all} system observables $C$ localisable in $O_3$ and not just the ones induced by Charlie's probe (if this class is smaller).

\section{Technical description}\label{sec:technical}

The setting of the FV framework is algebraic quantum field theory in possibly curved spacetime, which we now briefly recall.

\subsection{Lorentzian geometry}

We start by fixing notation and recalling standard results of Lorentzian geometry. Let $M$ be a globally hyperbolic spacetime, i.e., a time-oriented Lorentzian spacetime of dimension at least two that contains a Cauchy surface. For $N \subseteq M$ let $J^+(N)$ and $J^-(N)$ denote its causal future and past, respectively, and define its causal hull to be $\ch (N):=J^+(N) \cap J^-(N)$; $N$ is called causally convex if it equals its causal hull. 
Any open causally convex subset of $M$ will be called a \emph{region} and is
itself globally hyperbolic when regarded as a spacetime in its own right. Let $D^+(N)$ and $D^-(N)$ denote the future and past Cauchy developments of $N$, that is, the set of points $p \in M$ such that every past-, respectively, future-inextendible piecewise smooth causal curve through $p$ intersects $N$. Then $D(N):= D^+(N) \cup D^-(N)$ is called the Cauchy development or domain of dependence of $N$. The causal complement of a subset $K$ is defined to be $K^\perp := M \setminus (J^+(K) \cup J^-(K))$. For a compact subset $K$, the sets $M \setminus J^\mp(K)$ and $K^\perp$ are all open and causally convex and therefore globally hyperbolic. See, for example, the Appendix of~\cite{Fewster_2012} for details and proofs.

\subsection{Algebraic quantum field theory}

Let $M$ be a globally hyperbolic spacetime. An algebraic quantum field theory (AQFT), or simply a \emph{theory}, on $M$ consists of a $*$-algebra $\mathcal{A}$ with a unit $\openone$, together with a family of sub-$*$-algebras $\mathcal{A}(N)$ of $\mathcal{A}(M):=\mathcal{A}$, each containing $\openone$ and labelled by the regions $N\subseteq M$. The elements of $\mathcal{A}(N)$ are considered to be local observables of $N$, e.g., algebraic combinations of smeared fields `$\int_N f(x) \phi(x) \; \mathrm{d}x$' for a quantum field $\phi$ and a test function $f$ vanishing outside $N$.
This interpretation motivates the following additional assumptions: 

\paragraph*{Isotony:} for regions $N_1 \subseteq N_2$: $\mathcal{A}(N_1) \subseteq \mathcal{A}(N_2)$.

\paragraph*{Einstein causality:} for spacelike separated regions $N_1$ and $N_2$: the elements of $\mathcal{A}(N_1)$ commute with the elements of $\mathcal{A}(N_2)$. 

\paragraph*{Time-slice property:} for regions $N_1 \subseteq N_2$, so that $N_1$ contains a Cauchy surface for $N_2$: $\mathcal{A}(N_1) = \mathcal{A}(N_2)$. 

The time-slice property encodes the existence of a (not further specified) local dynamical law. Morally: a quantum field is determined by its data on a Cauchy surface. We emphasise that the time-slice property is local in the sense that it applies to every region $N_2$.

Due to time-slice (and isotony), every observable is localisable in many different, possibly disjoint regions. For example, if an observable $A$ is localisable in a region $N_1$ and $N_2$ is a disjoint region containing $N_1$ in its domain of dependence, i.e., $N_1 \subseteq D(N_2)$, then $A$ is also localisable in $N_2$.

One also assumes a \emph{Haag property}, which heuristically guarantees that the theory captures all relevant degrees of freedom. It is used to show that induced observables are localisable in every \emph{connected} region containing the coupling zone -- see~\cite{fewster2018quantum} for details.

\paragraph*{Haag property:} for every compact set $K \subseteq M$ and every \emph{connected} region $L$ containing $K$, $\mathcal{A}(L)$ contains every $C\in \mathcal{A}$ that commutes with all elements of $\mathcal{A}(K^\perp)$.~\footnote{\label{footnote_Haag}The stated Haag property could be strengthened to incorporate all (possibly disconnected) regions $L$ containing $K$. However it might then conflict with a further desirable property called \emph{additivity}, see p.\ 146 in~\cite{haag2012local}.}

In AQFT, a \emph{state} is a linear map $\omega:\mathcal{A}(M)\to \mathbb{C}$ which assigns expectation values to algebra elements and is therefore required to be normalised, $\omega(\openone)=1$, and positive, $\omega(A^*A)\ge 0$ for all $A\in\mathcal{A}(M)$.

\subsection{Coupled theories}

The coupling between probe and system theory and the resulting scattering map arise as follows: suppose we have three theories on a globally hyperbolic spacetime $M$: a system-theory $\mathcal{S}$, a probe-theory $\mathcal{P}$ and a coupled theory $\mathcal{C}$, which mirrors the crucial assumption that the coupled structure is itself \emph{local}. Let $\mathcal{S} \otimes \mathcal{P}$ denote the tensor-product theory, i.e., the uncoupled combination. As discussed before, $\mathcal{S}$ and $\mathcal{P}$ are coupled together \emph{only} in a compact coupling zone $K \subseteq M$, which is modelled by the existence of a bijective, structure and localisation preserving identification between the coupled and uncoupled theories \emph{outside} (the causal hull) of $K$, see~\cite{fewster2018quantum} for the details. For the \emph{in-region} $M^-$ and \emph{out-region} $M^+$ defined by $M^\pm:=M \setminus J^\mp(K)$, this gives us the following maps:
\begin{equation}
    \begin{aligned}
    &\mathcal{S} \otimes \mathcal{P} \to \big(\mathcal{S} \otimes \mathcal{P}\big)(M^+) \to \mathcal{C}(M^+) \to \mathcal{C},\\
    &\mathcal{C} \to \mathcal{C}(M^-) \to \big(\mathcal{S} \otimes \mathcal{P}\big)(M^-) \to \mathcal{S} \otimes \mathcal{P},
    \end{aligned}
    \label{eq_arrows}
\end{equation}
each of which is an isomorphism. The first, third, fourth and sixth are given by the time-slice property as $M^\pm$ each contain a Cauchy surface for $M$~\cite{Fewster_2012}. The other arrows are given by the localisation preserving identification map. The overall composition defines the \emph{scattering map} $\Theta: \mathcal{S} \otimes \mathcal{P} \to \mathcal{S} \otimes \mathcal{P}$, which is an automorphism preserving algebraic relations but \emph{not} localisation.
Our earlier discussion in Sec.~\ref{sec:heuristic_overview} implicitly assumed that $\Theta$ was implemented as the adjoint action of a unitary scattering operator $\sf S$, i.e., $\Theta(A)={\sf S}A{\sf S^\dagger}$, but this is neither needed nor assumed in what follows. The localisation properties of $\Theta$ are summarised in the following lemma. 

\begin{lem}[Proposition 3.1(b),(c) in~\cite{fewster2018quantum}] \hspace{1mm}
\begin{enumerate}[leftmargin=0.5cm]
    \item For every region $N \subseteq K^\perp: \Theta$ acts trivially on \linebreak $\big(\mathcal{S} \otimes \mathcal{P}\big)(N)$.
    \label{lem_theta_spacelike_loc_change_property_trivial_action_spacelike_separation}
    \item For every region $N \subseteq M^+$ \hspace{-2mm} and every region $N^- \subseteq M^-$ with $N \subseteq D(N^-): \; \Theta \big(\mathcal{S} \otimes \mathcal{P}\big)(N) \subseteq \big(\mathcal{S} \otimes \mathcal{P}\big)(N^-)$. 
\end{enumerate}
\label{lem_theta_spacelike_loc_change}
\end{lem}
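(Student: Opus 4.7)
Both parts are proved by tracing an element through the definition of $\Theta$ as the composition in~\eqref{eq_arrows}, exploiting the two structural inputs available: the identification between $\mathcal{C}$ and $\mathcal{S}\otimes\mathcal{P}$ preserves localisation on every region outside $\ch(K)$, and the time-slice property identifies the algebras of any two regions sharing a common Cauchy surface. Writing $\iota_+$ for the forward identification $(\mathcal{S}\otimes\mathcal{P})(M^+)\to\mathcal{C}(M^+)$ and $\iota_-$ for the backward identification $\mathcal{C}(M^-)\to(\mathcal{S}\otimes\mathcal{P})(M^-)$, the map $\Theta$ is (up to the trivial time-slice identifications) simply $\iota_-\circ\iota_+$.

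For part~1, the key geometric observation is $K^\perp\subseteq M^+\cap M^-$: every region $N\subseteq K^\perp$ lies simultaneously in the in-region and the out-region, and outside $\ch(K)$. Given $A\in(\mathcal{S}\otimes\mathcal{P})(N)$, localisation-preservation and isotony give $\iota_+(A)\in\mathcal{C}(N)\subseteq\mathcal{C}(M^-)$. Since $\iota_+$ and $\iota_-$ are mutually inverse on their overlap $M^+\cap M^-\supseteq N$, being restrictions of the single bijective identification defined on $M\setminus\ch(K)$, one concludes $\Theta(A)=\iota_-(\iota_+(A))=A$.

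For part~2, take $A\in(\mathcal{S}\otimes\mathcal{P})(N)$ with $N\subseteq M^+$, so that $\iota_+(A)\in\mathcal{C}(N)$. The crucial step is to promote this element to $\mathcal{C}(N^-)$, which is where the hypothesis $N\subseteq D(N^-)$ enters. By standard Lorentzian geometry one produces a region $\tilde N\supseteq N\cup N^-$ such that $N^-$ contains a Cauchy surface for $\tilde N$ (a natural candidate is $\tilde N=\operatorname{int} D(N^-)$). The time-slice property then gives $\mathcal{C}(N^-)=\mathcal{C}(\tilde N)$, while isotony gives $\mathcal{C}(N)\subseteq\mathcal{C}(\tilde N)$, so $\iota_+(A)\in\mathcal{C}(N^-)$. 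Applying $\iota_-$ finally yields $\Theta(A)\in(\mathcal{S}\otimes\mathcal{P})(N^-)$, as desired.

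The main technical obstacle is the geometric construction in part~2: producing an open, causally convex intermediate region $\tilde N$ in which $N^-$ functions as a Cauchy time slice. The point is delicate because Cauchy developments of open sets need not themselves be open or causally convex without further argument; the required facts are collected in the Appendix of~\cite{Fewster_2012}. Once that geometric input is in place, the rest of the proof is a formal manipulation of the arrows in~\eqref{eq_arrows}.
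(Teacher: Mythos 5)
Your argument is correct and is essentially the proof of Proposition~3.1(b),(c) in~\cite{fewster2018quantum}: part~1 follows because $K^\perp=M^+\cap M^-$ lies outside $\ch(K)$, where the forward and backward identifications are restrictions of one and the same map, and part~2 follows by pushing the localisation through $\mathcal{C}$ using isotony and the time-slice property applied to $D(N^-)$ (whose openness, causal convexity and admission of $N^-$ as a Cauchy ``slice'' are exactly the facts from the Appendix of~\cite{Fewster_2012} that you cite). Note that the paper itself states this lemma without proof, importing it from~\cite{fewster2018quantum}, so there is no in-paper argument to diverge from; your reconstruction matches the original.
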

The first property captures the idea that the coupling has no effect in spacelike separated regions, whereas the second property indicates how $\Theta$ changes the localisation of observables.

Now suppose that the system is prepared in state $\omega$ and the probe in state $\sigma$, and that a measurement of a probe observable is made. The state update rule (without selection) is that $\omega\mapsto \omega'$, where
\begin{equation}\label{eq_fv_single_state_update}
    \omega'(C) = (\omega\otimes\sigma)(\Theta(C\otimes\openone)),
\end{equation}
which is readily recognised as the analogue of~\eqref{eq_qm_single_state_update}.

\section{Main result}\label{sec:main_result}

Let us now discuss the rigorous FV version of Sorkin's protocol and Eq.~\eqref{eq_qm_AB_state_update}. Alice, Bob and Charlie each perform actions in the connected \emph{regions} $O_1, O_2, O_3$. We assume they fulfill (i) $O_2 \cap J^-(O_1) = \emptyset$; (ii) $O_3 \cap J^-(O_2) = \emptyset$; (iii) $O_3$ is spacelike separated form $O_1$; (iv) $O_3$ has 
compact closure $\overline{O}_3$. Note that this covers the situation sketched in Fig \ref{fig_1} but is more general. Let $\mathcal{S}$ be the system theory and let $\mathcal{P}_1,\mathcal{P}_2$ be the two probe theories of Alice and Bob with compact coupling zones $K_1,K_2$ contained in the regions $O_1, O_2$, respectively. Denote the corresponding in- and out-regions by $M_1^\mp,M_2^\mp$, the initial states by $\sigma_1,\sigma_2$ and the associated scattering maps by $\Theta_i: \mathcal{S} \otimes \mathcal{P}_i \to \mathcal{S} \otimes \mathcal{P}_i$ for $i=1,2$. On $\mathcal{S} \otimes \mathcal{P}_1 \otimes \mathcal{P}_2$ define $\hat{\Theta}_1 := \Theta_1 \otimes_3 \openone$ and $\hat{\Theta}_2 := \Theta_2 \otimes_2 \openone$, where the subscript on the tensor product indicates the slot into which the second factor is inserted. 

Let $C$ be a system observable localisable in $O_3$, Charlie's `region of control'. For example, $C$ could be
the induced observable corresponding to any probe observable of Charlie's. Owing to assumptions (i--iii), Alice, Bob and Charlie admit a causal order in which Alice's region precedes Bob's, and Bob's region precedes Charlie's. If Alice and Bob each perform a measurement, the expectation value for Charlie's measurement is therefore given by
\begin{equation}
    \begin{aligned}
    \omega_{AB}(C):= (\omega \otimes \sigma_1 \otimes \sigma_2)((\hat{\Theta}_1 \circ \hat{\Theta}_2)(C \otimes \openone \otimes \openone)),
    \end{aligned}
    \label{eq_successive_update}
\end{equation}
for initial system state $\omega$. Strictly speaking, when writing down Eq.~\eqref{eq_successive_update} at this stage, we make the assumption that the effect of two causally orderable measurements on the initial state $\omega$ is given in terms of a composition of individual state updates. However, further below in Sec.~\ref{sec:multiple_obs} we show how Eq.~\eqref{eq_successive_update} can be \emph{derived} in the FV framework. Furthermore, assumptions (i--iii) do not exclude the possibility that the regions controlled by Alice, Bob and Charlie also admit other causal orderings, but Charlie's expectation value is well-defined and independent of any choices made. This will also be discussed in greater depth in Sec.~\ref{sec:causal_fac}. On the other hand, if Alice does not perform her experiment, Charlie's expectation value is
\begin{equation}
    \omega_B(C) = (\omega\otimes\sigma_2)(\Theta_2 (C\otimes \openone)).
\end{equation}

 The following theorem [the rigorous analogue of Eq.~\eqref{eq_no_signalling}] shows that Sorkin's protocol does \emph{not} signal in the FV framework. Note that it gives the desired equality without the (possibly restricting) assumption that $C$ is an induced observable.
\begin{theo}
    \label{theo_no_signalling}
In the notation above, suppose the following assumptions hold: (a) $K_2 \cap J^-(K_1) = \emptyset$; (b) $O_3$ is a region with compact closure; (c) $O_3 \cap J^-(K_2) =\emptyset$; (d) $\overline{O}_3$ is spacelike separated from $K_1$. Then
 \begin{equation}
        \begin{aligned}
        \forall C \in \mathcal{S}(O_3): (\hat{\Theta}_1 \circ \hat{\Theta}_2) (C \otimes \text{\rm $\openone$} \otimes \text{\rm $\openone$}) = \hat{\Theta}_2 (C \otimes \text{\rm $\openone$} \otimes \text{\rm $\openone$}).
        \end{aligned}
    \end{equation}
\end{theo}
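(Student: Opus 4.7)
The plan is to apply the two parts of Lemma~\ref{lem_theta_spacelike_loc_change} in sequence: first property~(ii) to $\Theta_2$ in order to relocate $C \otimes \openone_{\mathcal{P}_2}$ from a region in $M_2^+$ to one in $M_2^-$, and then property~(i) to $\Theta_1$ in order to conclude that $\hat\Theta_1$ fixes the result. For this two-step scheme to succeed, the intermediate localisation region must additionally be causally disjoint from $K_1$, and this is where assumptions~(a)--(d) enter.

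Explicitly, $C \otimes \openone_{\mathcal{P}_2} \in (\mathcal{S}\otimes\mathcal{P}_2)(O_3)$, and $O_3 \subseteq M_2^+$ by assumption~(c). By property~(ii) of Lemma~\ref{lem_theta_spacelike_loc_change}, $\Theta_2(C \otimes \openone_{\mathcal{P}_2}) \in (\mathcal{S}\otimes\mathcal{P}_2)(N)$ for any region $N \subseteq M_2^-$ with $O_3 \subseteq D(N)$. Inserting $\openone_{\mathcal{P}_1}$ in the middle slot of the three-fold tensor product, $\hat\Theta_2(C \otimes \openone \otimes \openone)$ is exactly this element, now regarded as a member of $(\mathcal{S}\otimes\mathcal{P}_1\otimes\mathcal{P}_2)(N)$. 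If $N$ can also be arranged so that $N \subseteq K_1^\perp$, then property~(i) of Lemma~\ref{lem_theta_spacelike_loc_change}, applied to $\Theta_1$ trivially extended by the identity on $\mathcal{P}_2$, ensures that $\hat\Theta_1 = \Theta_1 \otimes_3 \openone$ fixes this element, which is precisely the theorem.

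The argument therefore reduces to a Lorentzian-geometric task: to exhibit an open causally convex region $N$ satisfying $N \subseteq M_2^- \cap K_1^\perp$ and $O_3 \subseteq D(N)$. I expect this to be the main obstacle, because the domain-of-dependence condition and the requirement to avoid $J^+(K_1) \cup J^-(K_1)$ pull in opposite directions. My plan is to combine the compactness of $\overline{O}_3$ (assumption~(b)) with the causal position assumptions~(a), (c), (d) and to use standard constructions in globally hyperbolic spacetimes (Cauchy surfaces, small open causally convex neighbourhoods) to select a suitable $N$: informally, an open causally convex thickening of a spacelike slice lying to the past of $\overline{O}_3$ inside $K_1^\perp$ and outside $J^+(K_2)$, chosen so that its future Cauchy development still covers $O_3$. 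Once such $N$ is in hand, the algebraic conclusion is immediate from the chain of localisations above.
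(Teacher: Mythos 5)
Your algebraic skeleton is exactly the paper's: localise $C\otimes\openone\otimes\openone$ in $O_3\subseteq M_2^+$, use part~(2) of Lemma~\ref{lem_theta_spacelike_loc_change} to relocate $\hat\Theta_2(C\otimes\openone\otimes\openone)$ into a region contained in $M_2^-\cap K_1^\perp$, and then invoke part~(1) to conclude that $\hat\Theta_1$ acts trivially on it. That reduction is correct and complete as far as it goes. However, the step you defer --- exhibiting a region $N\subseteq M_2^-\cap K_1^\perp$ with $O_3\subseteq D(N)$ --- is the actual mathematical content of the paper's proof (its Lemmas~\ref{lem_Cauchy_surface} and~\ref{lem_existence_L-}), and you have only described a plan for it, not carried it out. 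As submitted, the proof has a genuine gap at precisely the point you yourself flag as ``the main obstacle.''

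For comparison, the paper closes this gap as follows, and the construction is slightly cleaner than the ``thickening'' you envisage. One takes $N:=K_1^\perp\cap M_2^-=M_1^+\cap M_1^-\cap M_2^-$ directly; this is automatically open and causally convex as a finite intersection of such sets, so no thickening of a slice is needed. To see that $O_3\subseteq D(N)$, one picks a Geroch time function $t$ on the globally hyperbolic spacetime $M_1^+=M\setminus J^-(K_1)$ and chooses a level-set Cauchy surface $\Sigma=t^{-1}[\{\tau\}]$ with $\tau<\min t[K_2\cup\overline{O}_3]$ (compactness of $K_2$ and $\overline{O}_3$, your assumption~(b), guarantees the minimum exists). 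Then $\Sigma$ avoids $J^-(K_1)\cup J^+(K_2)\cup J^+(\overline{O}_3)$, and the set $T:=J^-(\overline{O}_3)\cap\Sigma$ lies in $K_1^\perp\cap M_2^-$: it misses $J^-(K_1)$ because $\Sigma$ does, and it misses $J^+(K_1)$ because $\overline{O}_3\subseteq K_1^\perp$ by assumption~(d). Since $\Sigma$ is a Cauchy surface for $M_1^+$ lying to the past of $\overline{O}_3$, every past-inextendible causal curve through a point of $O_3$ meets $\Sigma$ inside $J^-(\overline{O}_3)$, so $O_3\subseteq D(T)\subseteq D(N)$. Note that $T$ itself need not be open or causally convex; it suffices that it is contained in the region $N$. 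This resolves the tension you worried about between the domain-of-dependence requirement and avoiding $J^\pm(K_1)$: working inside $M_1^+$ disposes of $J^-(K_1)$ for free, and assumption~(d) disposes of $J^+(K_1)$.
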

\noindent
This immediately implies
\begin{equation}
    \begin{aligned}
         \omega_{AB}(C) &= (\omega \otimes \sigma_1 \otimes \sigma_2) ((\hat{\Theta}_1 \circ \hat{\Theta}_2)(C \otimes \openone \otimes \openone))\\
         &= (\omega \otimes \sigma_2)(\Theta_2(C \otimes \openone)) = \omega_B(C),
    \end{aligned}
\end{equation}
i.e., Charlie's measurement outcome is independent of whether Alice does or does not perform an experiment at all. There is no superluminal signalling.\\

The proof of Theorem~\ref{theo_no_signalling} relies on localisation properties of the scattering map, combined with a geometrical lemma.

\begin{figure}[ht]
\includegraphics[width=.4\textwidth]{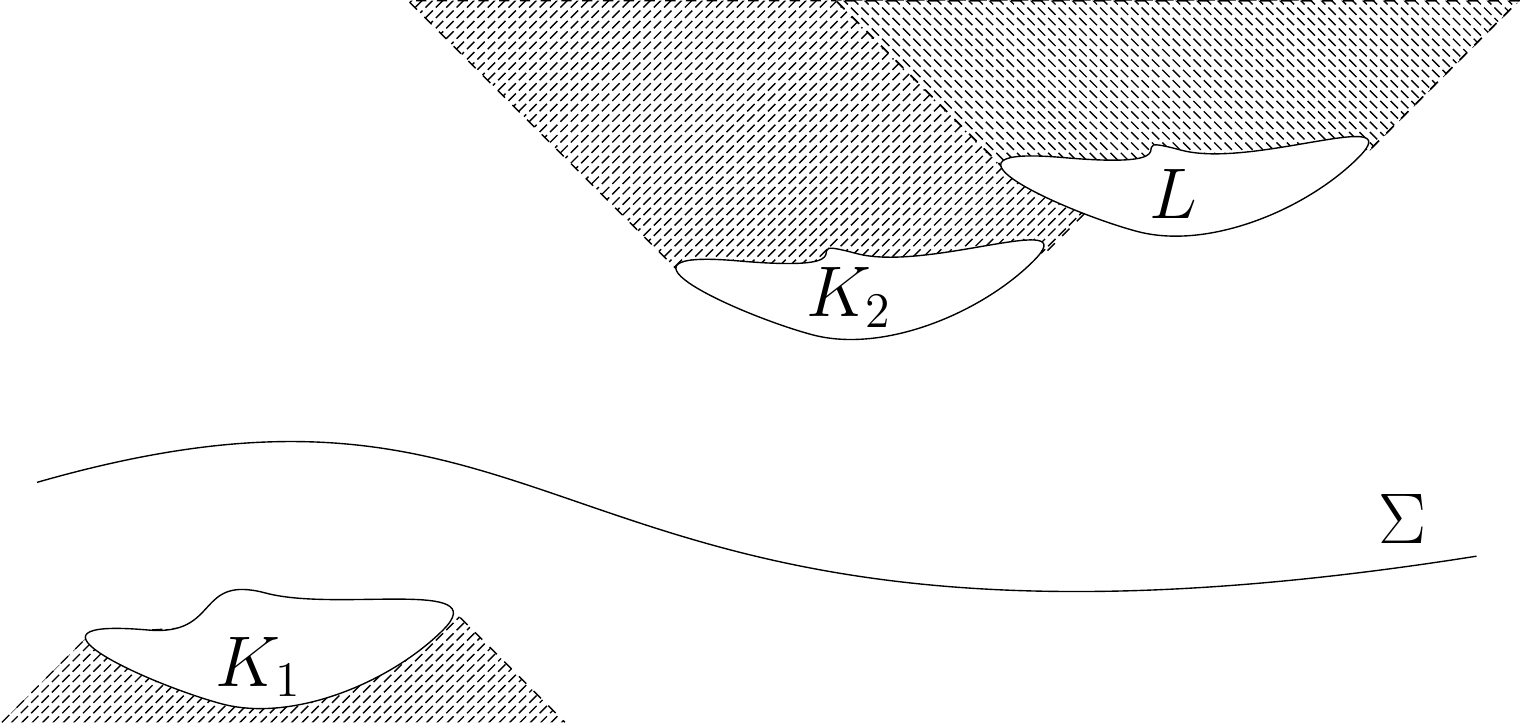}
\caption{\label{fig_2} Schematic spacetime diagram of the relative causal position of the compact sets $K_1$, $K_2$ (coupling zones) and $L$ ($\overline{O}_3$ in Lemma \ref{lem_existence_L-}) as well as the Cauchy surface $\Sigma$ in Lemma \ref{lem_Cauchy_surface}.}
\end{figure}

\begin{lem}
Let $K_1, K_2, L$ be compact subsets of $M$, let $K_2 \cap J^-(K_1) = \emptyset$ and $L \cap J^-(K_1) = \emptyset$. Then there exists a Cauchy surface $\Sigma$ of $M_1^+$ such that $\Sigma \subseteq M \setminus (J^-(K_1) \cup J^+(K_2) \cup J^+(L))$, as sketched in Fig.~\ref{fig_2}.
\label{lem_Cauchy_surface}
\end{lem}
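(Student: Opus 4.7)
The plan is to realise $\Sigma$ as a level set of a Cauchy time function on $M_1^+$, chosen low enough that $K_2\cup L$ sits strictly in its future. First I will note that $M_1^+=M\setminus J^-(K_1)$ is open and causally convex, hence globally hyperbolic as a spacetime in its own right, by the Lorentzian-geometry facts quoted at the end of Sec.~\ref{sec:technical}. The two hypotheses $K_2\cap J^-(K_1)=\emptyset$ and $L\cap J^-(K_1)=\emptyset$ say precisely that $K_2\cup L\subseteq M_1^+$, and this union is compact.

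Next I will apply the Bernal--Sánchez theorem to $M_1^+$ to obtain a smooth Cauchy time function $\tau\colon M_1^+\to\mathbb{R}$ with surjective range, all of whose level sets are smooth spacelike Cauchy surfaces of $M_1^+$. Because $K_2\cup L$ is compact, the continuous function $\tau$ attains a minimum $t_{\min}$ on it; I will pick any real $c<t_{\min}$ and set $\Sigma:=\tau^{-1}(c)$. Surjectivity of $\tau$ guarantees $\Sigma$ is a (nonempty) Cauchy surface of $M_1^+$, and the inclusion $\Sigma\subseteq M_1^+$ takes care of the disjointness $\Sigma\cap J^-(K_1)=\emptyset$ automatically.

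It then remains to show $\Sigma\cap\bigl(J^+(K_2)\cup J^+(L)\bigr)=\emptyset$. Suppose for contradiction that some $\sigma\in\Sigma$ lies in $J^+(p)$ for a $p\in K_2\cup L$, so that there is a future-directed causal curve $\gamma$ in $M$ from $p$ to $\sigma$. Both endpoints of $\gamma$ lie in $M_1^+$, and causal convexity of $M_1^+$ in $M$ forces $\gamma\subseteq M_1^+$; monotonicity of $\tau$ along future-directed causal curves in $M_1^+$ then yields $c=\tau(\sigma)\ge\tau(p)\ge t_{\min}>c$, the required contradiction. The crux is this causal-convexity step: without it, the connecting causal curve could dip into $J^-(K_1)$ and the time-function estimate would not apply. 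Once that hinge is in place, the remaining ingredients (existence of a surjective Cauchy temporal function and strict monotonicity along causal curves) are standard, and selecting $c<t_{\min}$ is routine.
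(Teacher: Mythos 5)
Your proof is correct and follows essentially the same route as the paper: realise $\Sigma$ as a level set of a Geroch/Bernal--S\'anchez time function on the globally hyperbolic spacetime $M_1^+$, chosen below the minimum of that function on the compact set $K_2\cup L$. The only difference is that you spell out the final verification (in particular the causal-convexity step ensuring the connecting causal curve stays in $M_1^+$), which the paper leaves implicit.
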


\begin{proof}
$M_1^+ = M \setminus J^-(K_1)$ is globally hyperbolic (see Lemma A.4 in~\cite{Fewster_2012}). By
Proposition 4 in~\cite{bernal2003smooth} (due to Geroch~\cite{geroch1970domain}) there exists a surjective, continuous function $t:M_1^+ \to \mathbb{R}$, strictly increasing on every future-directed causal curve, whose level sets are Cauchy surfaces for $M_1^+$. Since $K_2$ and $L$ are compact and $t$ is continuous, $\tilde{\tau}:= \min{t[K_2 \cup L]}$ exists. Choose $\tau < \tilde{\tau}$ and set $\Sigma:= t^{-1}[\{\tau\}]$. $\Sigma$ is a Cauchy surface for $M_1^+$ and fulfills the desired properties.  
\end{proof}

We apply the lemma for the case where $K_1,K_2$ are the coupling zones of Alice and Bob and $L$ is the closure of Charlie's  region of control, i.e., $L = \overline{O}_3$, which is compact. This allows us to prove that $O_3$ is contained in the domain of dependence of $K_1^\perp \cap M_2^-$. 

\begin{lem}
Let $K_1,K_2$ be compact subsets of $M$ such that $K_2 \cap J^-(K_1) = \emptyset$. Then for every  region $O_3$ with compact closure such that $O_3 \cap J^-(K_2) = \emptyset$ and $\overline{O}_3 \subseteq  K_1^\perp$ it holds that $O_3 \subseteq D(K_1^\perp \cap M_2^-)$.
\label{lem_existence_L-}
\end{lem}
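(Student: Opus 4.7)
The plan is to prove the stronger statement $O_3 \subseteq D^+(K_1^\perp \cap M_2^-)$, which implies the lemma. I would first invoke Lemma~\ref{lem_Cauchy_surface} with $L := \overline{O}_3$, whose hypothesis $L \cap J^-(K_1) = \emptyset$ is immediate from $\overline{O}_3 \subseteq K_1^\perp$. This yields a Cauchy surface $\Sigma$ of $M_1^+ = M \setminus J^-(K_1)$ with $\Sigma \subseteq M \setminus (J^-(K_1) \cup J^+(K_2) \cup J^+(\overline{O}_3))$. The two features I will exploit are $\Sigma \subseteq M_2^-$ and that $\Sigma$ lies strictly to the past of $\overline{O}_3$ with respect to the Cauchy time function $t$ used to construct it.

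Now fix $p \in O_3$ and let $\gamma \colon [0,T) \to M$ be any past-inextendible causal curve with $\gamma(0) = p$. Since $p \in O_3 \subseteq K_1^\perp \subseteq M_1^+$ and $M_1^+$ is open (as $J^-(K_1)$ is closed by compactness of $K_1$), there is a maximal $t_0 \in (0,T]$ with $\gamma([0, t_0)) \subseteq M_1^+$. The restriction $\gamma|_{[0, t_0)}$ is past-inextendible in $M_1^+$: if $t_0 = T$, directly from past-inextendibility in $M$; and if $t_0 < T$, because $\gamma(t_0) \in J^-(K_1)$, which forbids any further past-extension remaining in $M_1^+$. Using the Cauchy time function $t$ on $M_1^+$ from the proof of Lemma~\ref{lem_Cauchy_surface}, we have $t(p) \geq \min t[\overline{O}_3] \geq \tilde\tau > \tau$, so $p$ sits strictly to the future of $\Sigma = t^{-1}[\{\tau\}]$. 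The Cauchy surface property of $\Sigma$ in $M_1^+$ then forces $\gamma|_{[0, t_0)}$ to meet $\Sigma$ at some point $r$.

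Finally, I would check $r \in K_1^\perp \cap M_2^-$. Membership in $M_2^-$ is immediate from $r \in \Sigma \subseteq M_2^-$, and $r \notin J^-(K_1)$ since $r \in \Sigma \subseteq M_1^+$. If $r$ were in $J^+(K_1)$, then $p = \gamma(0) \in J^+(r) \subseteq J^+(K_1)$, contradicting $p \in K_1^\perp$; hence $r \in K_1^\perp$. Thus every past-inextendible causal curve through $p$ meets $K_1^\perp \cap M_2^-$, i.e., $p \in D^+(K_1^\perp \cap M_2^-) \subseteq D(K_1^\perp \cap M_2^-)$, as required. I expect the main obstacle to be the careful bookkeeping around the portion of $\gamma$ inside $M_1^+$, in particular verifying that this restriction is genuinely past-inextendible \emph{within} $M_1^+$ and that $p$ sits strictly above $\Sigma$; both rely on the openness of $M_1^+$ (compactness of $K_1$) and on the explicit time-function construction furnished by Lemma~\ref{lem_Cauchy_surface}.
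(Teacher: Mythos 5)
Your proposal is correct and follows essentially the same route as the paper: both apply Lemma~\ref{lem_Cauchy_surface} with $L=\overline{O}_3$ and then argue that past-directed causal curves from $O_3$ must meet the resulting Cauchy surface $\Sigma$ inside $K_1^\perp\cap M_2^-$. The only difference is that you explicitly unpack a step the paper merely asserts, namely that $O_3\subseteq D(J^-(\overline{O}_3)\cap\Sigma)$, including the bookkeeping for curves that leave $M_1^+$ --- a useful elaboration rather than a genuinely different argument.
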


\begin{proof}
By setting $L:=\overline{O}_3$ and using Lemma \ref{lem_Cauchy_surface}, we can find $\Sigma$, a Cauchy surface for $M_1^+$ which lies in $M \setminus (J^-(K_1) \cup J^+(K_2) \cup J^+(\overline{O}_3))$. Set $T:= J^-(\overline{O}_3) \cap \Sigma \subseteq K_1^\perp \cap M_2^-$. [$T$ is spacelike separated from $K_1$, because $\Sigma$ is disjoint from $J^-(K_1)$ and because $J^-(\overline{O}_3)$ is disjoint from $J^+(K_1)$ as $\overline{O}_3 \subseteq K_1^\perp$ by assumption.] Now $O_3 \subseteq D(T)$, while $K_1^\perp \cap M_2^- = M_1^+ \cap M_1^- \cap M_2^-$; as $M_1^+, M_1^-$ and $M_2^-$ are open and causally convex (see Lemma A.4 in~\cite{Fewster_2012}), so is their intersection, i.e., it is a region, and since it contains $T$, we have that $O_3 \subseteq D(T) \subseteq D(K_1^\perp \cap M_2^-)$.
\end{proof}

Theorem \ref{theo_no_signalling} now follows by using the localisation properties of $\Theta$ and the fact that Charlie's region of control is contained in the domain of dependence of a sub-region of $K_1^\perp$. 

\begin{proof}[Proof of Theorem \ref{theo_no_signalling}]
Since $C \in \mathcal{S}(O_3)$, $C \otimes \openone \otimes \openone$ can be localised in $O_3$ too. According to Lemma \ref{lem_existence_L-}, $O_3 \subseteq  D(K_1^\perp \cap M^-_2)$. According to Lemma \ref{lem_theta_spacelike_loc_change}, we know that $\hat{\Theta}_2(C \otimes \openone \otimes \openone)$ can be localised in the region $K_1^\perp \cap M^-_2$. But since $K_1^\perp \cap M^-_2 \subseteq K_1^\perp$, we have by Lemma \ref{lem_theta_spacelike_loc_change} that $\qty(\hat{\Theta}_1 \circ \hat{\Theta}_2) \qty(C \otimes \openone \otimes \openone) = \hat{\Theta}_2 \qty(C \otimes \openone \otimes \openone)$.
\end{proof}

\section{Causal factorisation}\label{sec:causal_fac}

In the previous section we showed that measurements of three observers described in the FV framework do not run into the potential superluminal signalling issues associated to Sorkin's \emph{impossible measurements}. To do this, we made the assumption that the effect of causally orderable measurements may be given in terms of a composition of individual state updates as in Eq.~\eqref{eq_successive_update}. In the next section we will show that this assumption can actually be derived in the FV framework as a result of what is called \emph{causal factorisation}, which we now describe.  Our presentation here is certainly not the most general possible but will be sufficient for our current purposes. We intend to report elsewhere on more abstract and general properties of causal factorisation.

To start, let $\mathcal{K}$ be a collection of
compact spacetime subsets. A linear order $\le$ on $\mathcal{K}$ is said to be a \emph{causal linear order} if $K< K'$ implies $J^-(K)\cap J^+(K')=\emptyset$ for every $K,K'\in \mathcal{K}$. It follows that whenever $K<K'$, there is a Cauchy surface of $M$ with $K$ to its past and $K'$ to its future. If $\mathcal{K}$ admits a causal linear order, we say that $\mathcal{K}$ is \emph{causally orderable}. A causally orderable set may admit more than one distinct causal linear order; this happens, for example, in the case of two spacelike separated sets. When the members of a causally orderable set $\mathcal{K}$ are the coupling zones for a collection of observers, we will describe the observers as causally orderable and use any causal linear ordering of $\mathcal{K}$ to induce a linear order on the collection of observers.

Now let $\mathcal{S}$ be a theory of interest and consider two causally orderable observers, $\sf A$ and $\sf B$, with probe theories $\mathcal{P}_{\sf A}$ and $\mathcal{P}_{\sf B}$. The description of $\sf A$'s measurements in the FV scheme involves \emph{inter alia} the uncoupled combination $\mathcal{S}\otimes \mathcal{P}_{\sf A}$ and a coupled theory $\mathcal{C}_{\sf A}$ with a coupling zone $K_{\sf A}$, along with a corresponding scattering map $\Theta_{\sf A}$ on $\mathcal{S}\otimes \mathcal{P}_{\sf A}$; $\sf B$'s measurements are described in a similar way. If both $\sf A$ and $\sf B$ measure independently, then they can be considered as a combined ``super-observer'' whose probe theory $\mathcal{P}_{\sf \{A,B\}}$ is a tensor product of $\mathcal{P}_{\sf A}$ and
$\mathcal{P}_{\sf B}$. As the two coupling regions $K_{\sf A}$ and $K_{\sf B}$
may be separated by a Cauchy surface, it is reasonable to assume that there is a combined coupled theory $\mathcal{C}_{\sf \{A,B\}}$ with coupling zone $K_{\sf A}\cup K_{\sf B}$, and a scattering map $\Theta_{\sf \{A,B\}}$, which can be decomposed as an appropriate composition of the individual scattering maps.
Accordingly, we say that the combination of $\sf A$ and $\sf B$ respects \emph{bipartite causal factorisation}, if and only if the coupled theory $\mathcal{C}_{\sf \{A,B\}}$ exists and
\begin{equation}
    \begin{aligned}
    \Theta_{\sf \{A,B\}} = \begin{cases}
    		\hat{\Theta}_{\sf A} \circ \hat{\Theta}_{\sf B} \quad \text{ if } K_{\sf B} \cap J^-(K_{\sf A})= \emptyset\\
    		\hat{\Theta}_{\sf B} \circ \hat{\Theta}_{\sf A} \quad \text{ if } K_{\sf A} \cap J^-(K_{\sf B})= \emptyset,
    		\end{cases}
    \end{aligned}
\end{equation}
where $\hat{\Theta}_{\sf X}$ ($\sf X=A,B$) denotes the trivial extension of the scattering map $\Theta_{\sf X}$ from an automorphism of $\mathcal{S} \otimes \mathcal{P}_{\sf X}$ to an automorphism of $\mathcal{S} \otimes \mathcal{P}_{\sf \{A,B\}}$ by tensoring with a suitable identity map. In particular, if the two coupling regions are spacelike separated then 
$ \Theta_{\sf \{A,B\}}$ may be factored in both ways. The assumption of bipartite causal factorisation is motivated by the expression for the scattering map in terms of time-ordered products in conventional perturbation theory. As a special case of Bogoliubov's factorisation relation, bipartite causal factorisation is in particular fulfilled by the time-ordered exponential of local coupling terms (with smooth cutoff) in renormalised perturbation theory~\cite{rejzner2016} and by the generators in recent non-perturbative Lagrangian approaches~\cite{Buchholz2020}. Moreover it is proved to hold for the probe models considered in~\cite{fewster2018quantum}.

Our treatment of multiple observers is based on three physically motivated assumptions:
  \begin{enumerate}
        \item every finite collection of causally orderable observers can be combined to form a super-observer, whose probe theory is a tensor product of the individual probe theories;
        \item the combination process may be achieved in a single step, or equivalently, as the result of successive stages of combination; 
        \item the combination of any causally orderable pair of observers fulfils \emph{bipartite causal factorisation}.
    \end{enumerate}
    
    To illustrate these ideas, let us consider three (distinct) causally orderable observers $\sf A,B,C$ admitting a causal linear order $\leq$ in which $\sf A \leq B \leq C$. The super-observer $\sf \{A,B,C\}$ can be formed in one go, or equivalently by first combining $\sf A$ and $\sf B$ to $\sf \{A,B\}$ and then further combining with $\sf C$; alternatively, we can first combine $\sf B$ with $\sf C$ and then combine with $\sf A$. Understanding `equivalence' as equality of scattering maps, we have 
    \begin{equation}
        \Theta_{\sf \{\{A,B\},C\}}=  \Theta_{\sf \{A,B,C\}}  
        = \Theta_{\sf \{A,\{B,C\}\}} 
    \end{equation}
    and, on using bipartite causal factorisation, one has
    \begin{equation}
    \begin{aligned}
    \Theta_{\sf \{A,B,C\}}  = 
    \hat{\Theta}_{\sf \{A,B\}} \circ \hat{\Theta}_{\sf C} = 
    \hat{\Theta}_{\sf A} \circ \hat{\Theta}_{\sf B} \circ \hat{\Theta}_{\sf C},
    \end{aligned}
    \end{equation}
    where the hats denote the extension of the scattering maps to $\mathcal{S} \otimes\mathcal{P}_{\sf \{A,B,C\}}$. 
    Moreover the assumptions also imply that whenever there is a choice between different causal orders for fixed $\sf A,B,C$, the combined scattering map $\Theta_{\sf \{A,B,C\}}$ can be written as a composition of the individual scattering maps in either of these orders.
    
    In general, given any finite set $\sf Obs$ of $N$ causally orderable observers, the super-observer has a combined probe theory
    \begin{equation}
        \mathcal{P}_{\sf Obs} = \bigotimes_{\sf X\in Obs} \mathcal{P}_{\sf X}
    \end{equation}
    and an overall scattering map $\Theta_{\sf Obs}$ on $\mathcal{S}\otimes\mathcal{P}_{\sf Obs}$ that factorises as
    \begin{equation}
        \Theta_{\sf Obs} = \hat{\Theta}_{\sf X_1}\circ \hat{\Theta}_{\sf X_2}
        \circ \cdots \circ \hat{\Theta}_{{\sf X}_N}
    \end{equation}
    whenever ${\sf X}_1 < {\sf X}_2 < \ldots < {\sf X}_N$ according to some causal linear ordering $\le$ of $\sf Obs$; the hats denote extensions to $\mathcal{S}\otimes\mathcal{P}_{\sf Obs}$. There are many equivalent formulae for $\Theta_{\sf Obs}$, arising from different ways of successively combining the observers.

\section{Measurements by multiple observers}
\label{sec:multiple_obs}

In this section we demonstrate how multiple successive measurements can be treated in the FV framework. We start with a discussion of one single observer and a pair of two observers, where we recall results from~\cite{fewster2018quantum}. We then move on to present the treatment of three observers, which readily generalises to the general $N \in \mathbb{N}$ observer case. We end this section with a discussion of the process of post-selection.

\subsection{Induced observables and effects}

Let $\mathcal{S}$ be a theory of interest and let $\sf A$ be an observer who wishes to measure (the expectation value) of some local observable of $\mathcal{S}$ in initial state $\omega$. Suppose $\sf A$ has probe theory $\mathcal{P}_{\sf A}$, initial probe state $\sigma_{\sf A}$, compact coupling zone $K_{\sf A}$, a coupled theory $\mathcal{C}_{\sf A}$, identification maps and associated scattering map $\Theta_{\sf A}: \mathcal{S} \otimes \mathcal{P}_{\sf A} \to \mathcal{S} \otimes \mathcal{P}_{\sf A}$. The prediction for the expectation value of a probe observable $O_{\sf A} \in \mathcal{P}_{\sf A}$ in the actual experiment conducted by $\sf A$, given initial system state $\omega$, is denoted by $\expt_{\sf A}(O_{\sf A}; \omega)$ and given by~\cite{fewster2018quantum}
\begin{equation}
    \begin{aligned}
    \expt_{\sf A}(O_{\sf A}; \omega)=(\omega \otimes \sigma_{\sf A}) \qty(\Theta_{\sf A} (\openone \otimes O_{\sf A})).
    \end{aligned}
    \label{eq_one_obs_expect_value}
\end{equation}

If the observable algebras are represented as operators on a Hilbert space, we can consider the case where $\omega$ and $\sigma_{\sf A}$ are given by density matrices and where $\Theta_{\sf A}$ is implemented as the adjoint action of a unitary scattering operator. Then the above equation is easily recognised as a straightforward generalisation of Eq.~\eqref{eq_non_technical_single_exp_value}.

Returning to the general situation, it was shown in~\cite{fewster2018quantum} that $\sigma_{\sf A}$ and $\Theta_{\sf A}$ give rise to a map $\varepsilon_{\sf A}: \mathcal{P}_{\sf A} \to \mathcal{S}$ such that
\begin{equation}
    \begin{aligned}
    \forall O_{\sf A} \in \mathcal{P}_{\sf A}: \; \omega\qty(\varepsilon_{\sf A} (O_{\sf A}))=(\omega \otimes \sigma_{\sf A}) \qty(\Theta_{\sf A} (\openone \otimes O_{\sf A})).
    \end{aligned}
    \label{eq_induced_obs}
\end{equation}
We call $\varepsilon_{\sf A} (O_{\sf A})$ the \emph{induced (system) observable} corresponding to $O_{\sf A}$, as introduced in Eq.~\eqref{eq_non-technical_induced_observable}.

Exploiting the \emph{Haag property} it can be shown that $\varepsilon_{\sf A} (O_{\sf A})$ is a local observable of the system theory, which can be localised in any connected region containing $K_{\sf A}$~\cite[Theorem~3.3]{fewster2018quantum}.

The interpretation of Eq.~\eqref{eq_induced_obs} is the following: if observer $\sf A$ is interested in the expectation value of a specific local system observable, then she needs to prepare and tune her physical detector, i.e., find $O_{\sf A}$, $\sigma_{\sf A}$ and $\Theta_{\sf A}$ such that $\varepsilon_{\sf A}(O_{\sf A})$ is the desired system observable. It is an open question whether this is always possible, so we take the pragmatic viewpoint and say the system observables of interest to an observer are those which can be measured using a probe, i.e., those which can be induced by some probe observable upon tuning the probe state and scattering map.

The result of an actual experiment is generically not immediately a sharp numerical outcome but rather an answer to a (finite collection of) yes-no question(s). In quantum theory, this is modelled abstractly by considering a projector or more generally an \emph{effect} $P$ associated to \emph{yes} and $\openone - P$ associated to \emph{no} as the main observables of interest. Recall that a $*$-algebra element $P$ is an effect if and only if $P^\dagger =P$ and $0 \leq P \leq \openone$ where $ 0 \leq P$ means that $P$ is a convex combination of elements of the form $A^\dagger A$. One frequently calls the expectation value of an effect $P$ its \emph{success probability}.

\subsection{Two observers}

Consider a set of two observers $\sf Obs= \{ A, B\}$ each of whom wishes to determine the expectation value of a system observable $\varepsilon_{\sf A}(O_{\sf A})$ and $\varepsilon_{\sf B}(O_{\sf B})$, respectively, for probe observables $O_{\sf A} \in \mathcal{P}_{\sf A}$ and  $O_{\sf B} \in \mathcal{P}_{\sf B}$. We intend to answer the following question: \emph{``What is the expected outcome of observer $\sf B$'s measurement?''}

Similar to before, for every ${\sf X} \in {\sf Obs}$ who interacts with a system-theory $\mathcal{S}$ in initial state $\omega$ we have a probe-theory $\mathcal{P}_{\sf X}$, initial state $\sigma_{\sf X}$, compact coupling zone $K_{\sf X}$, coupled theory $\mathcal{C}_{\sf X}$ identification maps and associated scattering map $\Theta_{\sf X}: \mathcal{S} \otimes \mathcal{P}_{\sf X} \to \mathcal{S} \otimes \mathcal{P}_{\sf X}$.

As in the previous section, we may regard the collection of all observers as super-observer in its own right with probe $\mathcal{P}:= \bigotimes_{{\sf X} \in {\sf Obs}} \mathcal{P}_{\sf X}$, coupling zone $K:= \bigcup_{{\sf X} \in {\sf Obs}} K_{\sf X}$ and initial state $\sigma:= \bigotimes_{{\sf X} \in {\sf Obs}} \sigma_{\sf X} $ on $\mathcal{P}$. Let $O:= \bigotimes_{{\sf X} \in {\sf Obs}} O_{\sf X}$ be the probe observable of interest. We assume the existence of an associated coupled theory $\mathcal{C}$ emerging from coupling $\mathcal{S}$ to $\mathcal{P}$ in $K$ giving rise to a scattering map $\Theta:\mathcal{S} \otimes \mathcal{P} \to \mathcal{S} \otimes \mathcal{P}$.

Let us now assume that after (an ensemble of) their experiments, the observers meet in their joint future to analyse their experimental data together. Since we consider the two of them to constitute a single super-observer, the expectation value of their joint measurement is, according to Eq.~\eqref{eq_one_obs_expect_value},
\begin{equation}
    \begin{aligned}
\expt_{\sf \{A,B\}}(O;\omega)=   (\omega \otimes \sigma) \qty(\Theta(\openone \otimes O)).
    \end{aligned}
    \label{eq_two_obs_one_super_probe_A}
\end{equation}
If $O_{\sf A}$ and $O_{\sf B}$ are effects, $O$ is an effect as well and Eq.~\eqref{eq_two_obs_one_super_probe_A} can be understood as the success probability of the ``combined effect'' $O$ corresponding to the success of both $O_ {\sf A}$ and $O_{ \sf B}$, i.e., $\expt_{\sf \{A,B\}}(O;\omega)  =\Prob_{\sf \{A,B\}}(O_{\sf A}\& O_{\sf B};\omega)$.

In the context of effects it is also immediately possible to give an answer to the posed question. $\sf B$'s expected outcome is the success probability $ \Prob_{\sf \{A,B\}}(O_{\sf B};\omega)$ that $\sf B$ observes probe effect $O_{\sf B}$ given initial system state $\omega$ irrespective of $\sf A$'s outcome. It is given as a marginal probability
\begin{equation}
    \begin{aligned}
    &\Prob_{\sf \{A,B\}}(O_{\sf B};\omega)\\
    &= \Prob_{\sf \{A,B\}}(O_{\sf A}\& O_{\sf B};\omega) + \Prob_{\sf \{A,B\}}((\neg O_{\sf A})\& O_{\sf B};\omega)\\
    &=  (\omega \otimes \sigma) \qty(\Theta(\openone \otimes O_{\sf A} \otimes O_{\sf B})) \\
    &\quad + (\omega \otimes \sigma) \qty(\Theta(\openone \otimes (\openone - O_{\sf A}) \otimes O_{\sf B}))\\
    &=(\omega \otimes \sigma) (\Theta(\openone \otimes \hat{O}_{\sf B})),
    \end{aligned}
    \label{eq_two_obs_motivation}
\end{equation}
where we used an explicit order of the tensor product, $\mathcal{P}= \mathcal{P}_{\sf A} \otimes \mathcal{P}_{\sf B} $ and where the hat denotes the inclusion of $O_{\sf B}$ in $\mathcal{P}$, $\hat{O}_{\sf B}=\openone\otimes O_{\sf B}$. (The case in which one wishes to perform an analysis post-selected on a specific outcome of observer $\sf A$ is known as \emph{selective} measurement and is discussed further at the end of this section.)

Note that Eq.~\eqref{eq_two_obs_motivation} only depends on $\sf A$ through $\sigma_{\sf A}$ and the coupled scattering map $\Theta$ and, in particular, is independent of $O_{\sf A}$. As a matter of fact, the above discussion in terms of effects can be seen as a motivation for considering Eq.~\eqref{eq_two_obs_motivation} to be the answer to the posed question even in the situation where $O_{\sf A}$ and $O_{\sf B}$ are \emph{not} effects, i.e., generally:
\begin{equation}
    \begin{aligned}
    \expt_{\sf \{A.B\} }(O_{\sf B};\omega)=(\omega \otimes \sigma) (\Theta(\openone \otimes \hat{O}_{\sf B})).
    \end{aligned}
    \label{eq_two_obs_answer}
\end{equation}
This is useful since, e.g., the field $*$-algebra of the linear scalar field does not admit any non-trivial effects at all. Moreover, the expression makes a prediction for the \emph{expectation value} of $\sf B$'s experiment, which can be determined from $\sf B$'s local experimental data alone. There is no need for $\sf B$ to meet $\sf A$ in their joint future to conduct data analysis together.

Let us continue the investigation of expression~\eqref{eq_two_obs_answer} in the physically relevant case that the set $\sf Obs$ is \emph{causally orderable}. There are at most two possible linear causal orders on $\sf Obs$, corresponding to the cases 1. $\sf A \leq B$; or 2. $\sf B \leq A$.

If the combination of the two observers respects bipartite causal factorisation, the super-scattering map decomposes as $\Theta = \hat{\Theta}_{\sf A} \circ \hat{\Theta}_{\sf B}$ in the first case, while $\Theta = \hat{\Theta}_{\sf B} \circ \hat{\Theta}_{\sf A}$ in the second. Before continuing, it is convenient to observe that, upon writing $\hat{\Theta}_{\sf B}(\openone \otimes O_{\sf A} \otimes O_{\sf B}) = \sum_j S_j \otimes O_{\sf A} \otimes B_j$ and noting that $\varepsilon_{\sf B}(O_{\sf B}) = \sum_j \sigma_{\sf B}(B_j) S_j$, the following holds:
\begin{equation}
    \begin{aligned}
   &(\omega \otimes \sigma_{\sf A} \otimes \sigma_{\sf B})(\hat{\Theta}_{\sf A} \circ \hat{\Theta}_{\sf B}(\openone \otimes O_{\sf A} \otimes O_{\sf B}))\\
   &=(\omega \otimes \sigma_{\sf A} \otimes \sigma_{\sf B})(\hat{\Theta}_{\sf A}( \sum_j S_j \otimes O_{\sf A} \otimes B_j))\\
   &=(\omega \otimes \sigma_{\sf A})(\Theta_{\sf A}( \sum_j \sigma_{\sf B}(B_j) S_j \otimes O_{\sf A}))\\
   &= (\omega \otimes \sigma_{\sf A})(\Theta_{\sf A} (\varepsilon_{\sf B}(O_{\sf B}) \otimes O_{\sf A})).
   \end{aligned}
   \label{eq_observation}
    \end{equation}
This allows us to simplify Eq.~\eqref{eq_two_obs_answer} in each case:\\

1. For $\sf A \leq B$ we order the tensor product of probes as $\mathcal{P}=\mathcal{P}_{\sf A}\otimes\mathcal{P}_{\sf B}$ and get
    \begin{equation}
    \begin{aligned}
   \expt_{\sf \{A,B\}}(O_{\sf B};\omega)&= (\omega \otimes \sigma)(\Theta(\openone \otimes \hat{O}_{\sf B}))\\
   &= (\omega \otimes \sigma_{\sf A} \otimes \sigma_{\sf B})(\hat{\Theta}_{\sf A} \circ \hat{\Theta}_{\sf B}(\openone \otimes \hat{O}_{\sf B}))\\
   &= (\omega \otimes \sigma_{\sf A})(\Theta_{\sf A} (\varepsilon_{\sf B}(O_{\sf B}) \otimes \openone)).
   \end{aligned}
   \end{equation}
    Therefore, if the system state $\omega_{\sf A}$ is defined so that $\omega_{\sf A}(C):=(\omega\otimes\sigma_{\sf A})(\Theta_{\sf A}(C\otimes \openone))$ for all $C\in\mathcal{S}$, $\sf B$'s expected outcome in this situation is  
   \begin{equation}
       \expt_{\sf \{A,B\}}(O_{\sf B};\omega) =\omega_{\sf A}(\varepsilon_{\sf B}(O_{\sf B}))=\expt_{\sf B}(O_{\sf B};\omega_{\sf A}),
   \label{eq_two_obs_ordered_ans}
    \end{equation}
    which is his expected outcome if $\sf A$ does not measure, but with the system prepared in state $\omega_{\sf A}$ instead of $\omega$. This is the justification for regarding $\omega_{\sf A}$ as the updated system state associated with $\sf A$'s measurement, as asserted in Eqs.~\eqref{eq_qm_single_state_update} and~\eqref{eq_fv_single_state_update}. For future reference, let us define the update map $\mathcal{J}_{\sf A}(\omega):= \omega_{\sf A}$.
    
    2. For $\sf B \leq A$ and after ordering the tensor product of probes as $\mathcal{P}=\mathcal{P}_{\sf B}\otimes\mathcal{P}_{\sf A}$ for convenience,
    \begin{equation}
    \begin{aligned}
    \expt_{\sf \{A,B\}}(O_{\sf B};\omega)&=(\omega \otimes \sigma)(\Theta(\openone \otimes \hat{O}_{\sf B}))\\
   &= (\omega \otimes \sigma_{\sf B} \otimes \sigma_{\sf A})(\hat{\Theta}_{\sf B} \circ \hat{\Theta}_{\sf A}(\openone \otimes O_{\sf B}))\\
   &= (\omega \otimes \sigma_{\sf B})(\Theta_{\sf B} (\openone \otimes O_{\sf B}))\\
   &= \omega(\varepsilon_{\sf B}(O_{\sf B}))=\expt_{\sf B}(O_{\sf B};\omega),
   \end{aligned}
    \end{equation}
    where we used that $\hat{\Theta}_{\sf A}(\openone \otimes \hat{O}_{\sf B}) =(\openone \otimes \hat{O}_{\sf B})$. This follows from the fact that in the present order of the tensor product of probes we have that $\openone \otimes \hat{O}_{\sf B}= \openone  \otimes O_{\sf B}\otimes \openone$ and $\hat{\Theta}_{\sf A} = \Theta_{\sf A} \otimes_2 \openone$, where again the subscript on the tensor product indicates the slot into which the second factor is inserted. Recalling that observer $\sf B$  precedes $\sf A$ (with respect to $\leq$), the above result shows that there is no influence from the future to the past.

Finally, we remark that if $K_{\sf A}$ and $K_{\sf B}$ are spacelike separated, the causal order is not unique: there is an ordering corresponding to case 1 and another corresponding to case 2. However, there is no ambiguity because $\mathsf{B}$'s expected outcome is given by~\eqref{eq_two_obs_answer} in either case. This implies in particular that
\begin{equation}
\omega_{\sf A}(\varepsilon_{\sf B}(O_{\sf B})) =\omega(\varepsilon_{\sf B}(O_{\sf B}))
\end{equation}
if $K_{\sf A}, K_{\sf B}$ are spacelike separated (as has been observed in~\cite{fewster2018quantum}).

\subsection{Three observers}

The obvious next step is to consider three observers and to give an answer to the question: \emph{``For a set of three observers $\sf Obs = \{A,B,C\}$, each of which performs a measurement, what is the expected outcome of observer $\sf B$'s measurement of the induced system observable $\varepsilon_{\sf B}(O_{\sf B})$?''}

Following the reasoning and notation of before we can immediately write down the answer as
\begin{equation}
    \begin{aligned}
 \expt_{\sf \{A,B,C\}}(O_{\sf B};\omega) =  (\omega \otimes \sigma) (\Theta(\openone \otimes \hat{O}_{\sf B})),
    \end{aligned}
     \label{eq_three_obs_answer}
\end{equation}
where we again assumed the existence of an appropriate overall super-observer similar to the bipartite case.

Let us further investigate Eq.~\eqref{eq_three_obs_answer} under the additional assumption that $\sf Obs$ is causally orderable and that causal factorisation holds. As $\sf A$ and $\sf C$ can be interchanged, there are at most three cases: $\sf A,B,C$ are such that there exists a linear order $\leq$ with (1) $\sf A \leq B \leq C$; (2) $\sf C \leq A \leq B$; or (3) $\sf B \leq C \leq A$.

Choosing a convenient order of tensor products and using results from before yields the following:\\

1. For $\sf A \leq B \leq C$:
    \begin{equation}
    \begin{aligned}
   &\expt_{\sf \{A,B,C\}}(O_{\sf B};\omega)\\
   &= (\omega \otimes \sigma_{\sf A} \otimes \sigma_{\sf B} \otimes \sigma_{\sf C})(\hat{\Theta}_{\sf A} \circ \hat{\Theta}_{\sf B} \circ \hat{\Theta}_{\sf C} (\openone \otimes \hat{O}_{\sf B}))\\
    &= (\omega \otimes \sigma_{\sf A} \otimes \sigma_{\sf B})(\check{\Theta}_{\sf A} \circ \check{\Theta}_{\sf B}(\openone \otimes \check{O}_{\sf B}))\\
   &= (\omega \otimes \sigma_{\sf A})(\Theta_{\sf A} (\varepsilon_{\sf B}(O_{\sf B}) \otimes \openone))\\
   &= \omega_{\sf A}(\varepsilon_{\sf B}(O_{\sf B}))\\
   &=\expt_{\sf B}(O_{\sf B};\omega_{\sf A}),
   \end{aligned}
   \label{eq_three_obs_ordered_ABC}
    \end{equation}
    where the ha\v{c}ek denotes the extension to $\mathcal{S} \otimes \mathcal{P}_{\{ \sf A,B\}}$ and where, similarly to before, we used that $\hat{\Theta}_{\sf C} (\openone \otimes \hat{O}_{\sf B}) = \openone \otimes \hat{O}_{\sf B}$ and that $\hat{\Theta}_{\sf A}$ and $\hat{\Theta}_{\sf B}$ act trivially on $\mathcal{P}_{\sf C}$. The upshot is that observer $\sf B$'s outcome is given by taking the initial system state, updating it according to the map $\mathcal{J}_{\sf A}$ associated to the observer preceding $\sf B$ (with respect to $\leq$) and evaluating the updated state on $\sf B$'s induced system observable. The observer succeeding $\sf B$ (with respect to $\leq$) can be completely ignored.\\
    
    2. For $\sf C \leq A \leq B$, ordering  $\mathcal{P}=\mathcal{P}_{\sf C}\otimes\mathcal{P}_{\sf A}\otimes\mathcal{P}_{\sf B}$,
    \begin{equation}
    \begin{aligned}
   &\expt_{\sf \{A,B,C\}}(O_{\sf B};\omega)\\
   &= (\omega \otimes \sigma_{\sf C} \otimes \sigma_{\sf A} \otimes \sigma_{\sf B})(\hat{\Theta}_{\sf C} \circ \hat{\Theta}_{\sf A} \circ \hat{\Theta}_{\sf B} (\openone \otimes \hat{O}_{\sf B}))\\
    &= (\omega \otimes \sigma_{\sf C} \otimes \sigma_{\sf A})(\check{\Theta}_{\sf C} \circ \check{\Theta}_{\sf A}( \varepsilon_{\sf B}(O_{\sf B} \otimes \openone))\\
   &= \omega_{\sf AC}(\varepsilon_{\sf B}(O_{\sf B}))\\
   &=\expt_{\sf B}(O_{\sf B};\omega_{\sf AC}),
   \end{aligned}
   \label{eq_three_obs_ordered_ACB}
    \end{equation}
    where $\omega_{\sf AC} := \qty(\mathcal{J}_{\sf A} \circ \mathcal{J}_{\sf C}) (\omega)$, cf.~Theorem~3.5 in~\cite{fewster2018quantum}. Here, the ha\v{c}ek denotes the trivial extension of the scattering maps to $\mathcal{S} \otimes \mathcal{P}_{\{\sf A, C\}}$.
The investigation of this case has some interesting consequences. First it provides a proof of Eq.~\eqref{eq_successive_update} which we used in the discussion of Sorkin's protocol (where Alice takes the place of $\sf C$ and Bob takes the place of $\sf A$ here). Second, if we regard $\sf X:= \{\sf A,C\}$ as a super-observer in its own right, then Eq.~\eqref{eq_three_obs_ordered_ACB} can be written as 
\begin{equation}
    \begin{aligned}
&\expt_{\sf \{A,B,C\}}(O_{\sf B};\omega)\\
   &= (\omega \otimes \sigma_{\sf X} \otimes \sigma_{\sf B})(\hat{\Theta}_{\sf X} \circ \hat{\Theta}_{\sf B}(\openone \otimes \hat{O}_{\sf B}))\\
   &= (\omega \otimes \sigma_{\sf X})(\Theta_{\sf X} (\varepsilon_{\sf B}(O_{\sf B}) \otimes \openone))\\
   &= \omega_{\sf X}(\varepsilon_{\sf B}(O_{\sf B}))\\
   &=\expt_{\sf B}(O_{\sf B};\omega_{\sf X}),
   \end{aligned}
    \end{equation}
    which is the same calculation as in the case of two observers leading to Eq.~\eqref{eq_two_obs_ordered_ans}. This idea will be used in the remaining case, as well as later on to simplify the investigation of $N > 3$ observers. (We will continue to use capital sans-serif Latin letters for individual observers, i.e., elements of $\sf Obs$, \emph{as well as} super-observers, i.e., subsets of $\sf Obs$.)\\
    
    3. For $\sf B \leq C \leq A$, ordering  $\mathcal{P}=\mathcal{P}_{\sf B}\otimes\mathcal{P}_{\sf C}\otimes\mathcal{P}_{\sf A}$ and then regarding $\sf X:= \{A,C\}$ as super-probe in its own right enables us to write
    \begin{equation}
    \begin{aligned}
    &\expt_{\sf \{A,B,C\}}(O_{\sf B};\omega)\\
   &= (\omega \otimes \sigma_{\sf B} \otimes \sigma_{\sf X})(\hat{\Theta}_{\sf B} \circ \hat{\Theta}_{\sf X}  (\openone \otimes \hat{O}_{\sf B}))\\
   &= (\omega \otimes \sigma_{\sf B})(\Theta_{\sf B} (\openone \otimes \hat{O}_{\sf B}))\\
   &= \omega(\varepsilon_{\sf B}(O_{\sf B}))\\
   &=\expt_{\sf B}(O_{\sf B};\omega),
   \end{aligned}
    \end{equation}
    which reinforces the message that there is no signalling from the future to the past.
    
    A given set of observers $\sf \{A,B,C\}$ can admit more than one causal order, however, as for two observers, the answer for the various admissible cases will agree.
    
\subsection{$N$ observers}
Let us assume that we have a finite set $\sf Obs$ of $N$ causally orderable observers, each of whom wishes to determine the expectation value of an induced system observable. We fix one observer $\sf B \in Obs$ and ask this: \emph{``What is the expected outcome $\expt_{\sf Obs}(O_{\sf B};\omega)$ of observer $\sf B$'s measurement of the induced system observable $\varepsilon_{\sf B}(O_{\sf B})$?''} More specifically, we want to know how the general answer $\expt_{\sf Obs}(O_{\sf B};\omega)= (\omega \otimes \sigma) (\Theta(\openone \otimes \hat{O}_{\sf B}))$ may be simplified in this situation.

Any fixed causal order $\leq$ on $\sf Obs$ gives rise to the following tripartite partition $\sf A:= \{ X \in Obs| X< B \}$, $\{\sf B\}$ and $\sf C:=\{ X \in Obs| X> B \}$. It immediately follows from Eq.~\eqref{eq_three_obs_ordered_ABC} in the analysis of three observers that

\begin{equation}
    \begin{aligned}
    &\expt_{\sf Obs}(O_{\sf B};\omega)= \omega_{\sf A}(\varepsilon_{\sf B}(O_{\sf B}))=\expt_{\sf B}(O_{\sf B};\omega_{\sf A}).
   \end{aligned}
   \label{eq_N_obs_ans_}
    \end{equation}
    That is, $\sf B$'s expected outcome is equal to the one obtained in the absence of the other observers, but in  system state $\omega_{\sf A}$ (which is equal to $\omega$ if $\sf A$ is empty, i.e., if $\sf B$ is the earliest observer according to $\le$). 
    Moreover, suppose that the constituent observers of $\sf A$
are ordered $\sf A_1<A_2<\cdots< A_{|A|}$ according to $\le$. Then causal factorisation gives
    \begin{equation}
    \begin{aligned}
   \Theta_{\sf A} = \hat{\Theta}_{\sf A_1} \circ \cdots \circ  \hat{\Theta}_{{\sf A_{|A|}}},
   \end{aligned}
\end{equation}
which implies that we can write the updated state $\omega_{\sf A}$ as the result of $\sf |A|$ many individual updates according to
\begin{equation}
    \begin{aligned}
   \omega_{\sf A}&=(\mathcal{J}_{\sf A_{|A|}} \circ \cdots \circ\mathcal{J}_{\sf A_1}) (\omega)=\omega_{\sf A_{|A|}, ... , {\sf A_1}},
   \end{aligned}
\end{equation}
which follows from the ${\sf|A|}=2$ case by induction.

This demonstrates how multiple measurements are modelled in the FV framework and shows in particular how the familiar concept of successive state-updates is recovered in the situation of causally orderable observers. As we have emphasised, in spite of the possible ambiguity of the causal order, \emph{causal factorisation} ensures that the answer is unique and also free of any influence from the future to the past with respect to any causal order on the set of observers. However, Sorkin's impossible measurements raise the question of whether any rule for assigning expectation values might be plagued by other acausal influences. It is the purpose of Sec.~\ref{sec:Sorkin-N} to show that this is not the case in the FV framework, thus generalising the results of Sec.~\ref{sec:main_result}.

Before that, we end the present section with a discussion of \emph{selective} measurements, i.e., the process of post-selection and associated conditional expectation values.

\subsection{Conditional expectation and post-selection}

Let $\sf Obs=\{ A, B\}$ be a set of two causally orderable observers each of whom performs a measurement of system observable $\varepsilon_{\sf X}(O_{\sf X})$ for $\sf X \in Obs$. In this context: \emph{``What is observer $\sf B$'s outcome conditioned on a certain outcome of $\sf A$'s measurement?''} 

To this end let us consider the situation where $O_{\sf A}$ is an effect and where $\sf B$ is interested in the outcome of his experiment given that $\sf A$ measures successfully. As before, let us motivate our answer by first considering the case of an \emph{effect} $O_{\sf B}$ and let us assume that, after conducting (an ensemble of) their experiments, the observers meet in their joint future to analyse their data together. In this case (slightly generalising Sec.~3.3 of~\cite{fewster2018quantum}) we can write down the success probability of $O_{\sf B}$ conditioned on (success of) $O_{\sf A}$ as the conditional probability
\begin{equation}
    \begin{aligned}
\Prob_{\sf \{A,B\}}(O_{\sf B} | O_{\sf A};\omega)&=   \frac{\Prob_{\sf \{A,B\}}(O_{\sf B} \& O_{\sf A};\omega)}{\Prob_{\sf \{A,B\}}( O_{\sf A};\omega)}\\
&=   \frac{(\omega \otimes \sigma) \qty(\Theta(\openone \otimes   O_{\sf A}\otimes O_{\sf B}))}{(\omega \otimes \sigma) (\Theta(\openone \otimes \hat{O}_{\sf A}))}\\
&=\frac{\omega(\varepsilon_{\sf \{A, B\}}(  O_{\sf A}\otimes O_{\sf B}))}{\omega(\varepsilon_{\sf A}(O_{\sf A}))},
    \end{aligned}
    \label{eq_two_obs_selective_prob}
\end{equation}
under the standing assumption that $\Prob_{\sf \{A,B\}}(O_{\sf A};\omega)=(\omega \otimes \sigma) (\Theta(\openone \otimes \hat{O}_{\sf A})) \neq 0$, and where we used the explicit order of tensor products $\mathcal{P} = \mathcal{P}_{\sf A} \otimes \mathcal{P}_{\sf B}$. We emphasise that the conditional success probability is operationally determined by means of \emph{post-selection,} i.e., the \emph{selection} of those members of the ensemble of the combined experiment, which yielded a positive answer to $\sf A$'s measurement. This requires access to the experimental data of both $\sf B$ and $\sf A$ and can consequently only be performed in their joint future.

Having found the conditional success probability, we again view it as a justification for postulating the following conditional expectation as an answer to the question in the case where $O_{\sf B}$ is not necessarily an effect but a general observable:
\begin{equation}
    \begin{aligned}
\expt_{\sf \{A,B\}}(O_{\sf B} | O_{\sf A};\omega)&=\frac{\omega(\varepsilon_{\sf \{A, B\}}(  O_{\sf A}\otimes O_{\sf B}))}{\omega(\varepsilon_{\sf A}(O_{\sf A}))}.
    \end{aligned}
    \label{eq_two_obs_selective}
\end{equation}

Let us investigate Eq.~\eqref{eq_two_obs_selective} further in the following cases:

1. $\sf A \leq B:$ Using Eq.~\eqref{eq_observation} we observe that
    \begin{equation}
    \begin{aligned}
    \omega(\varepsilon_{\sf \{A, B\}}(O)) = (\omega \otimes \sigma_{\sf A})(\Theta_{\sf A}(\varepsilon_{\sf B}(O_{\sf B}) \otimes O_{\sf A}),
    \end{aligned}
    \end{equation}
    which allows us to write
    \begin{equation}
    \begin{aligned}
    \expt_{\sf \{A,B\}}(O_{\sf B} | O_{\sf A};\omega)&=   \frac{(\omega \otimes \sigma_{\sf A})(\Theta_{\sf A}(\varepsilon_{\sf B}(O_{\sf B}) \otimes O_{\sf A})}{(\omega \otimes \sigma) (\Theta(\openone \otimes \hat{O}_{\sf A}))}\\
    &=\frac{(\omega \otimes \sigma_{\sf A})(\Theta_{\sf A}(\varepsilon_{\sf B}(O_{\sf B}) \otimes O_{\sf A})}{(\omega \otimes \sigma_{\sf A}) (\Theta_{\sf A}(\openone \otimes O_{\sf A}))}\\
    &=: \omega_{{\sf A} | O_{\sf A}}(\varepsilon_{\sf B}(O_{\sf B})) \\ 
    &= \expt_{\sf B}(O_{\sf B};\omega_{{\sf A} | O_{\sf A}}),
    \label{eq_selective_two_obs}
    \end{aligned}
    \end{equation}
    in terms of the \emph{selective} update map \begin{equation}
    \mathcal{J}_{{\sf A} | O_{\sf A}}(\omega):= \omega_{{\sf A} | O_{\sf A}},
    \end{equation}
    which yields a well-defined state provided that $O_{\sf A}$ has nonzero success probability $(\omega \otimes \sigma_{\sf A}) (\Theta_{\sf A}(\openone \otimes O_{\sf A}))\neq 0$~\cite{fewster2018quantum}. Equation~\eqref{eq_selective_two_obs} shows how Eq.~\eqref{eq_two_obs_selective} can be understood in terms of an updated state in the case where there exists a causal order such that $\sf A$ precedes $\sf B$ and also constitutes a proof of Eq.~\eqref{eq_non_technical_conditional_expectation}. Additionally note that $\mathcal{J}_{{\sf A} | \openone}=\mathcal{J}_{\sf A}$.\\
    
    2. $\sf B \leq A:$ the interpretation of this scenario is that $\sf B$ performs post-selection on a measurement that (at least with respect to one causal order) \emph{succeeds} his own. There is \emph{a priori} no reason to expect this post-selection to be trivial and we have not found any simplified expression for $\expt_{\sf \{A,B\}}(O_{\sf B} | O_{\sf A};\omega)$ in this case. One might naively think that such a post-selection conflicts causality as there is an \emph{apparent} influence from the future to the past. This issue is resolved by reminding oneself that post-selection can only be performed by all observers together in their joint future.

For completeness we mention that it was shown in Theorem~3.4 in~\cite{fewster2018quantum} how in the case of spacelike separation, any possible \emph{apparent} acausal behaviour of the selective update map can be attributed to spacelike correlations of the initial state $\omega$. To see this we observe that for spacelike separated $\sf A, B$ 
\begin{equation}
\begin{aligned}
&\hat{\Theta}_{\sf A} \circ \hat{\Theta}_{\sf B} (\openone \otimes O_{\sf A} \otimes O_{\sf B}) \\
&=\hat{\Theta}_{\sf A} \circ \hat{\Theta}_{\sf B} (\openone \otimes O_{\sf A} \otimes \openone \cdot  \openone \otimes \openone \otimes O_{\sf B})\\
&= \qty(\hat{\Theta}_{\sf A} \circ \hat{\Theta}_{\sf B} (\openone \otimes O_{\sf A} \otimes \openone)) \cdot \qty(  \hat{\Theta}_{\sf A} \circ \hat{\Theta}_{\sf B} (\openone \otimes \openone \otimes O_{\sf B}))\\
&= \qty(\hat{\Theta}_{\sf A} \circ \hat{\Theta}_{\sf B} (\openone \otimes O_{\sf A} \otimes \openone)) \cdot \qty(  \hat{\Theta}_{\sf B} \circ \hat{\Theta}_{\sf A} (\openone \otimes \openone \otimes O_{\sf B}))\\
&= \qty( \Theta_{\sf A} (\openone \otimes O_{\sf A}) \otimes \openone) \cdot \qty(  \Theta_{\sf B}(\openone \otimes O_{\sf B}) \otimes_2 \openone),
\end{aligned}
\end{equation}
which (cf. Sec. 3.2 in~\cite{fewster2018quantum}) implies that
\begin{equation}
    \begin{aligned}
        \omega(\varepsilon_{\sf \{A, B\}}(  O_{\sf A}\otimes O_{\sf B})) = \omega(\varepsilon_{\sf A}(  O_{\sf A}) \varepsilon_{\sf B}( O_{\sf B})).
    \end{aligned}
\end{equation}
This shows that for spacelike separated observers
\begin{equation}
    \begin{aligned}
\expt_{\sf \{A,B\}}(O_{\sf B} | O_{\sf A};\omega)=\expt_{\sf B}(O_{\sf B};\omega)
    \end{aligned}
\end{equation}
if and only if $\omega(\varepsilon_{\sf A}(O_{\sf A}) \varepsilon_{\sf B}(O_{\sf B})) = \omega(\varepsilon_{\sf A}(O_{\sf A})) \omega(\varepsilon_{\sf B}(O_{\sf B}))$, i.e., $\varepsilon_{\sf A}(O_{\sf A})$ and $\varepsilon_{\sf B}(O_{\sf B})$ are uncorrelated in state $\omega$.

The generalisation to $N$ observers in the case where $\sf Obs$ admits a causal order $\leq$ such that $\sf B$ is the \emph{latest} observer (with respect to $\leq$) follows immediately: we look at the partition $\sf A:= \{ X \in Obs| X < B\}$ and $\{\sf B\}$ and wish to condition $\sf B$'s expected outcome of a measurement of the probe-effect $O_{\sf B}$ on the successful measurement of the probe-effects $O_{\sf X}$ for $\sf X \in A $. Setting $O_{\sf A}:= \bigotimes_{\sf X \in A} O_{\sf X}$ yields just as before
\begin{equation}
\begin{aligned}
\expt_{\sf Obs}(O_{\sf B} | O_{\sf A};\omega)= \omega_{{\sf A} | O_{\sf A}}(\varepsilon_{\sf B}(O_{\sf B})) = \expt_{\sf B}(O_{\sf B};\omega_{{\sf A} | O_{\sf A}}).
\end{aligned}
\end{equation}
It is noteworthy that the total selective update map $\mathcal{J}_{ {\sf A} | O_{\sf A}}(\omega)$ can be written as a composition of individual selective update maps 
\begin{equation}
\begin{aligned}
\mathcal{J}_{ {\sf A} | O_{\sf A}}(\omega)=(\mathcal{J}_{\sf |A|} \circ \cdots \circ \mathcal{J}_1)(\omega),
\end{aligned}
\end{equation}
where we again ordered the constituent observers of $\sf A$ as $\sf A_1 < \cdots < A_{|A|}$ and used the short-hand notation $\mathcal{J}_r:=\mathcal{J}_{ {\sf A}_{r} | O_{{\sf A}_{r}}}$. This follows by induction from the case $|\mathsf{A}|=2$ given by Theorem~3.5
in~\cite{fewster2018quantum}.

\section{Absence of impossible measurements for multiple observers}
\label{sec:Sorkin-N}

In this section we demonstrate the absence of any acausal influence in the measurements of an arbitrary finite number of causally orderable observers in a theory respecting causal factorisation. 

To that end let us reconsider the situation of $N$ observers $\sf Obs$. As in the previous section, we focus our attention on a fixed observer $\sf B$, taking the role played by Charlie in Sec.~\ref{sec:main_result}, and a fixed linear order $\leq$. As shown before in Eq.~\eqref{eq_N_obs_ans_}, $\sf B$'s expected outcome $\expt_{\sf Obs}(O_{\sf B};\omega)$ equals $\omega_{\sf A}(\varepsilon_{\sf B}(O_{\sf B}))$ in the absence of any post-selection on results of any other observers. Let us assume that there is an observer $\sf Y \in A$ who is spacelike separated from $\sf B$, i.e, $K_{\sf B} \subseteq K^\perp_{{\sf Y}}$, and will play the role of Alice. This gives rise to the partition $\sf A=X\cup \{Y\} \cup Z$, where $\sf X:= \{ J \in A| J< Y \}$ and $\sf Z:=\{ J  \in Obs| J> Y \}$. The super-observer $\sf Z$ will play the role of Bob. We can then write
\begin{equation}
    \begin{aligned}
   \omega_{\sf A}(\varepsilon_{\sf B}(O_{\sf B})) = (\omega_{\sf X} \otimes \sigma_{{\sf Y}} \otimes \sigma_{\sf Z})(\hat{\Theta}_{{\sf Y}} \circ \hat{\Theta}_{\sf Z}(\varepsilon_{\sf B}(O_{\sf B}) \otimes \openone)).
   \end{aligned}
\end{equation}
The following holds:
\begin{theo}
If, in the above notation, $K_{\sf B}$ is connected and spacelike separated from $K_{{\sf Y}}$, then:
\begin{equation}
    \begin{aligned}
   &(\omega_{\sf X} \otimes \sigma_{{\sf Y}} \otimes \sigma_{\sf Z})(\hat{\Theta}_{{\sf Y}} \circ \hat{\Theta}_{\sf Z}(\varepsilon_{\sf B}(O_{\sf B}) \otimes \text{\rm $\openone$}))\\
   &=(\omega_{\sf X} \otimes \sigma_{\sf Z})(\Theta_\mathsf{Z}(\varepsilon_{\sf B}(O_{\sf B}) \otimes \text{\rm $\openone$})).
   \end{aligned}
\end{equation}
\label{theo_general_no_signalling}
\end{theo}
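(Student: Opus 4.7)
The plan is to reduce Theorem~\ref{theo_general_no_signalling} to Theorem~\ref{theo_no_signalling} by casting $\sf Y$ in the role of Alice, the super-observer $\sf Z$ in the role of Bob (with combined coupling zone $K_{\sf Z} := \bigcup_{\mathsf{J}\in\mathsf{Z}} K_{\mathsf{J}}$ and scattering map $\Theta_{\sf Z}$ furnished by causal factorisation as in Section~\ref{sec:causal_fac}), and $\sf B$ in the role of Charlie. Since Theorem~\ref{theo_no_signalling} delivers an \emph{operator} identity on $\mathcal{S}\otimes\mathcal{P}_{\sf Y}\otimes\mathcal{P}_{\sf Z}$, the replacement of the initial system state $\omega$ there by $\omega_{\sf X}$ here is harmless: the closing step will simply consist of evaluating the identity in $\omega_{\sf X}\otimes\sigma_{\sf Y}\otimes\sigma_{\sf Z}$.

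The first task is to relocalise $\varepsilon_{\sf B}(O_{\sf B})$. By the Haag property, $\varepsilon_{\sf B}(O_{\sf B}) \in \mathcal{S}(L)$ for every connected region $L$ containing $K_{\sf B}$, so I would select $L$ to be a connected region with compact closure such that $\overline{L} \subseteq K_{\sf Y}^{\perp}$ and $L \cap J^{-}(K_{\sf Z}) = \emptyset$. Such an $L$ exists because $K_{\sf B}$ is compact and connected, and it already lies in the open set $K_{\sf Y}^{\perp} \cap (M \setminus J^{-}(K_{\sf Z}))$: the first inclusion is the spacelike separation hypothesis, and the second is enforced by the causal order $\mathsf{Z}<\mathsf{B}$, which gives $K_{\sf B} \cap J^{-}(K_{\sf Z}) = \emptyset$. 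Producing a connected, causally convex, relatively compact open neighbourhood of a compact set inside a prescribed open subset of a globally hyperbolic spacetime is a standard Lorentzian-geometry construction.

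With such $L$ in hand, the hypotheses of Theorem~\ref{theo_no_signalling} are all satisfied for $K_1 := K_{\sf Y}$, $K_2 := K_{\sf Z}$ and $O_3 := L$: (a) follows from $\mathsf{Y}<\mathsf{Z}$, while (b)--(d) hold by construction of $L$. The theorem yields the operator identity
\begin{equation}
(\hat{\Theta}_{\sf Y} \circ \hat{\Theta}_{\sf Z})\bigl(\varepsilon_{\sf B}(O_{\sf B}) \otimes \openone \otimes \openone\bigr) = \hat{\Theta}_{\sf Z}\bigl(\varepsilon_{\sf B}(O_{\sf B}) \otimes \openone \otimes \openone\bigr)
\end{equation}
on $\mathcal{S}\otimes\mathcal{P}_{\sf Y}\otimes\mathcal{P}_{\sf Z}$. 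Evaluating both sides in $\omega_{\sf X}\otimes\sigma_{\sf Y}\otimes\sigma_{\sf Z}$, the left-hand side reproduces the left-hand side of the claim; on the right-hand side $\hat{\Theta}_{\sf Z}$ acts trivially on the $\mathcal{P}_{\sf Y}$ factor, so the partial evaluation over that factor just uses $\sigma_{\sf Y}(\openone)=1$ and collapses the expression to $(\omega_{\sf X}\otimes\sigma_{\sf Z})(\Theta_{\sf Z}(\varepsilon_{\sf B}(O_{\sf B})\otimes\openone))$, as required.

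The main obstacle I anticipate is the Lorentzian-geometric step of constructing $L$ with all required properties simultaneously --- connectedness, causal convexity, relative compactness, and the two separation conditions. Everything else is essentially bookkeeping built on already established machinery: the Haag property provides the relocalisation, causal factorisation supplies the super-scattering map $\Theta_{\sf Z}$, and the causality content is carried entirely by Theorem~\ref{theo_no_signalling}.
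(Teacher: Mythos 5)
Your proposal is correct and follows essentially the same route as the paper: cast $\mathsf{Y}$, $\mathsf{Z}$, $\mathsf{B}$ as Alice, Bob and Charlie, relocalise $\varepsilon_{\sf B}(O_{\sf B})$ via the Haag property in a connected precompact causally convex neighbourhood of $K_{\sf B}$ inside $K_{\sf Y}^\perp \cap M_{\sf Z}^+$ with closure in $K_{\sf Y}^\perp$, and then apply Theorem~\ref{theo_no_signalling} as an operator identity before evaluating in $\omega_{\sf X}\otimes\sigma_{\sf Y}\otimes\sigma_{\sf Z}$. The geometric step you flag as the main obstacle is precisely what the paper isolates and proves as Lemma~\ref{lem_topological}, by exhausting the relevant connected component of $K_{\sf Y}^\perp\cap M_{\sf Z}^+$ (viewed as a globally hyperbolic spacetime in its own right) by precompact open sets and passing to a causal hull.
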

A consequence of this theorem is that
\begin{equation}
    \begin{aligned}
   \omega_{\sf A}(\varepsilon_{\sf B}(O_{\sf B})) = \omega_{\sf A\setminus \{ Y \}}(\varepsilon_{\sf B}(O_{\sf B})),
   \end{aligned}
\end{equation}
and hence 
\begin{equation}
    \expt_{\sf Obs}(O_{\sf B};\omega) = \expt_{\sf A\setminus\{Y\}}(O_{\sf B};\omega) ,
\end{equation}
emphasising that we can completely ignore the spacelike separated observer ${\sf Y}$ as well as the super-observer $\sf C$ succeeding $\sf B$ with respect to $\leq$.

It follows by successive application of the theorem that no observer that is spacelike separated from $\sf B$ can influence the expected outcome of observer $\sf B$'s measurement. This shows that there is no Sorkin-type (or any other) superluminal signalling between the individuals in the $N$ observer case if each coupling zone is connected. We remark that the need to restrict to connected coupling zones comes from the connectedness condition in the formulation of the Haag property. If connectedness is dropped from the Haag property (cf. footnote~\cite{Note1}), then one can also drop the connectedness in the above theorem.\\

The proof of Theorem \ref{theo_general_no_signalling} relies on the geometrical Lemma \ref{lem_topological} and an application of Theorem \ref{theo_no_signalling}.

\begin{lem}
For every connected compact subset $K \subseteq M$ there exists a connected region $N \supseteq K$ with compact closure.
\label{lem_topological}
\end{lem}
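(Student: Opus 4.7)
The plan is to construct $N$ by sandwiching $K$ between two Cauchy surfaces of $M$ and taking the intersection of their Cauchy developments, then passing to the connected component containing $K$. Using that any globally hyperbolic spacetime admits a smooth Cauchy temporal function $t:M\to\mathbb{R}$, and that $K$ is compact, I pick levels $t_0<\min t(K)$ and $t_1>\max t(K)$ and set $\Sigma_i:=t^{-1}(t_i)$ for $i=0,1$; these are smooth spacelike Cauchy surfaces between which $K$ lies. By the standard compactness properties of globally hyperbolic spacetimes, $J^-(K)\cap\Sigma_0$ and $J^+(K)\cap\Sigma_1$ are compact, so I may choose open neighborhoods $V_i$ of these sets inside the respective Cauchy surfaces, with compact closures $\overline{V_i}$ in $\Sigma_i$.

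I then take $W:=D(V_0)\cap D(V_1)$ as the candidate. To see $K\subseteq W$, note that every past-inextendible causal curve through $p\in K$ meets $\Sigma_0$ at a point of $J^-(p)\cap\Sigma_0\subseteq J^-(K)\cap\Sigma_0\subseteq V_0$, so $K\subseteq D(V_0)$; the argument for $K\subseteq D(V_1)$ is symmetric. The set $W$ is open because each $D(V_i)$ is an open subset of $M$ (as $V_i$ is open in a smooth spacelike Cauchy surface), and it is causally convex because each $D(V_i)$ is causally convex and this property is preserved under intersection.

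The main technical step I expect is the compactness of $\overline{W}$. It relies on the standard consequence of global hyperbolicity that the Cauchy development of a compact subset of a Cauchy surface is itself compact. Granted this, $D(\overline{V_i})$ is compact, hence closed in the Hausdorff manifold $M$, and contains $D(V_i)$, so $\overline{D(V_i)}\subseteq D(\overline{V_i})$. Therefore $\overline{W}\subseteq D(\overline{V_0})\cap D(\overline{V_1})$, which is the intersection of two compact sets and so compact.

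Finally, I define $N$ to be the connected component of $W$ containing $K$, which is well-defined because $K$ is connected and contained in the open set $W$. Then $N$ is open (connected components of an open subset of a manifold are open), connected by construction, contains $K$, and has compact closure as a subset of $\overline{W}$. Causal convexity passes to connected components: any causal curve joining two points of $N$ lies in $W$ by causal convexity of $W$, its continuous image is connected and meets $N$, and therefore lies entirely in $N$. Thus $N$ is the required connected region with compact closure.
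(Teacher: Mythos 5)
Your overall strategy (sandwich $K$ between two Cauchy surfaces, intersect domains of dependence, pass to the connected component containing $K$) is different from the paper's, but it founders on the step you flag as "the main technical step": the claim that \emph{the Cauchy development of a compact subset of a Cauchy surface is compact} is not a consequence of global hyperbolicity --- it is false. The simplest counterexample is a spatially compact globally hyperbolic spacetime (e.g.\ the flat cylinder $\mathbb{R}\times S^1$), where $\overline{V}$ may be the whole compact Cauchy surface and $D(\overline{V})=M$. More seriously for your construction, it fails even for proper compact subsets: in de~Sitter space, the future Cauchy development of a closed hemisphere $C$ of the $t=0$ slice contains an entire future-inextendible timelike geodesic, because the past light cone of every point on the central worldline meets $\{t=0\}$ in a ball of radius strictly less than $\pi/2$, hence inside $C$. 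So $D^+(\overline{V_0})$ need not be precompact, $\overline{D(V_0)}\subseteq D(\overline{V_0})$ buys you nothing, and the compactness of $\overline{W}$ is not established. Note also that you are not free to shrink $V_0$ arbitrarily, since it must contain $J^-(K)\cap\Sigma_0$, which can itself be large; and even granting a repair of the individual developments, the piece $D^-(V_0)\cap D^-(V_1)$ of $W$ is only bounded \emph{above} by $\Sigma_0$ and could still extend to the infinite past. A workable repair would replace the appeal to compactness of developments by the genuinely standard fact that $J^+(K_0)\cap J^-(K_1)$ is compact for compact $K_0,K_1$ in a globally hyperbolic spacetime, but that changes the argument in an essential way.

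For comparison, the paper's proof sidesteps Cauchy developments entirely: it takes a precompact open set $G_\beta\supseteq K$ from an exhaustion of the manifold and uses the causal hull $\ch(G_\beta)=J^+(G_\beta)\cap J^-(G_\beta)$, which is open and causally convex by construction and precompact precisely because $\ch(\overline{G}_\beta)=J^+(\overline{G}_\beta)\cap J^-(\overline{G}_\beta)$ is compact --- a statement that, unlike yours, really is part of the definition of global hyperbolicity. Your final step (connected components of an open causally convex set are open, causally convex regions, and connected $K$ lies in exactly one of them) agrees with the paper's and is correct.
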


\begin{proof}
A subset with compact closure is called \emph{precompact}. As a smooth manifold, $M$ has an exhaustion by countably many precompact open sets $G_\alpha$ such that $\overline{G}_\alpha \subseteq G_{\alpha +1}$ and $M= \bigcup_\alpha G_\alpha$. Since $K$ is compact, it can be covered by finitely many $G_\alpha$ and since they are nested, there exists $\beta$ such that $K \subseteq G_\beta$. Since $M$ is globally hyperbolic, the causal hull of $G_\beta$, $\ch(G_\beta)$, is open (see Lemma A.8 in~\cite{Fewster_2012}) and $\ch(\overline{G}_\beta)$ is compact (by definition of global hyperbolicity). Since obviously $G_\beta \subseteq \overline{G}_\beta$, it also follows that $\ch(G_\beta) \subseteq \ch(\overline{G}_\beta)$ [as $J^\pm(G_\beta) \subseteq J^\pm(\overline{G}_\beta)$ and $\ch(G_\beta) = J^-(G_\beta) \cap J^+(G_\beta)$]. So $\ch(G_\beta)$ is a subset of a compact set and hence precompact. As $\ch(G_\beta)$ can be viewed as a globally hyperbolic manifold in its own right, every connected component is hence precompact, connected, open and causally convex. Since $K$ is connected and contained in $\ch(G_\beta)$, it is contained in one connected component, which finishes the proof.
\end{proof}
Let us now prove Theorem \ref{theo_general_no_signalling}.
\begin{proof}[Proof of Theorem \ref{theo_general_no_signalling}]
By assumption, $K_{\sf B} \subseteq K_{{\sf Y}}^\perp$ and from the existence of the causal order $K_{\sf B} \subseteq M^+_{\sf Z}= M \setminus J^-(K_{\sf Z})$, and since $K_{\sf B}$ is connected, it is contained in one connected component of the open, causally convex subset $K_{{\sf Y}}^\perp \cap M_{\sf Z}^+$. This connected component can be viewed as a globally hyperbolic manifold in its own right and hence we can apply Lemma \ref{lem_topological} to furnish a connected region $N \subseteq K_{{\sf Y}}^\perp \cap M_{\sf Z}^+$ that contains $K_{\sf B}$ and has compact closure fully contained in $K_{{\sf Y}}^\perp$. Now $\varepsilon_{\sf B}(O_{\sf B})$ is localisable in $N$, moreover, after identifying $K_1 := K_{{\sf Y}}$, $K_2:= K_{\sf Z}$ and $O_3:=N$, we see that the assumptions of Theorem \ref{theo_no_signalling} are fulfilled, thus establishing the desired equality.
\end{proof}

\section{Conclusions and outlook}
\label{sec:conclusion}

The issue of measurement in QFT has been plagued by acausality exemplified by Sorkin's protocol. Our main result shows that a consistent and fully causal interpretation of tripartite measurement processes in the sense of measurement schemes is possible via the local and covariant proposal in~\cite{fewster2018quantum}, which is applicable to generic quantum field theories coupled to external forces and on possibly curved spacetimes. The principle of \emph{causal factorisation} of scattering processes for an arbitrary finite number of causally orderable observers allowed us to generalise our result to the $N$ observer case. As opposed to other work, such as~\cite{borsten2019impossible}, our result thereby provides a class of ``physically allowed operations'' that can be characterised abstractly \emph{as well as} constructed explicitly in specific models, see~\cite{fewster2018quantum}.

The FV framework may be considered a first important step towards a solution to the problem of delineating \emph{all} `physically allowed quantum operations' raised in~\cite{beckman2002}; however, whether all of them are induced by FV measurement schemes is unknown. It is therefore important to more explicitly characterize the system observables associated to measurement schemes. We intend to report on this issue elsewhere. It is also worth noting that local scattering operators, understood as operations reflecting the result of measuring observables, have recently been proposed as a new foundation for AQFT~\cite{Buchholz2020} and this viewpoint could be fruitfully combined with ours. 

For our key assumption of causal factorization, we have restricted ourselves to a physical motivation and formulation of this assumption. A more rigorous, mathematical investigation would be very interesting and we intend to report on this elsewhere.

Finally it is worth mentioning that \emph{non-relativistic}, \emph{non-local} particle detector models are a very common tool and widely used for example in quantum field theory in curved spacetime and relativistic quantum information. Other authors have shown that coupling such a detector model to a finite number of field modes~\cite{benincasa2014quantum} or to all but the zero mode~\cite{tjoa2019zero} leads to superluminal signalling. In view of our result, this is due to the non-locality of such a coupling, whereas the detector model sketched at the end of~\cite{fewster2018quantum} consequently does not signal superluminally. 
Applying the FV framework to questions in which particle detector models have so far been used, for example \emph{entanglement harvesting}~\cite{Pozas-Kerstjens_2015}, promises to yield additional insight both on a conceptual level and with respect to explicitly computable results.

\begin{acknowledgments}

We thank Roger Colbeck for useful comments on the text and Ian Jubb and Leron Borsten for stimulating discussions on the general topic of impossible measurements. The work of M.~H.~R.~was supported by a Mathematics Excellence Programme Studentship awarded by the Department of Mathematics at the University of York.

\end{acknowledgments}

%\bibliography{apssamp}% Produces the bibliography via BibTeX.

\providecommand{\noopsort}[1]{}\providecommand{\singleletter}[1]{#1}%

\end{document}